\documentclass[12pt]{amsart}

\usepackage{mathptmx}
\usepackage{helvet}
\usepackage{courier}
\usepackage[margin=2.5cm]{geometry}

\usepackage{graphicx}
\usepackage{amsmath,amssymb,bm}

\usepackage{amscd}
\usepackage{verbatim}
\usepackage{enumerate}

\newtheorem{theorem}{Theorem}

\newtheorem{proposition}{Proposition}
\newtheorem{lemma}{Lemma}
\newtheorem{remark}{Remark}

\newtheorem{corollary}{Corollary}

\newcommand{\g}{\mathfrak{g}}

\newcommand{\hh}{\mathfrak{h}}
\newcommand{\kf}{\mathfrak{k}}
\newcommand{\af}{\mathfrak{a}}

\newcommand{\RR}{\mathbb{R}}
\newcommand{\CC}{\mathbb{C}}

\newcommand{\pa}{\partial}

\newcommand{\be}{\begin{equation}}
\newcommand{\ee}{\end{equation}}

\newcommand{\cM}{\mathcal{M}}
\newcommand{\cS}{\mathcal{S}}

\newcommand{\cO}{\mathcal O}
\newcommand{\cP}{\mathcal P}

\newcommand{\cB}{\mathcal B}


\DeclareMathOperator{\Ad}{Ad}

\title{Periodic and open classical spin Calogero-Moser chains}

\author{Nicolai Reshetikhin}
\address{N.R.: YMSC, Tsinghua University, Beijing; Department of Mathematics, University of California, Berkeley, \& Physics Department, St. Petersburg University.}
\email{reshetik@math.berkeley.edu}


\begin{document}

\begin{abstract}
We construct a class of interacting spin Calogero-Moser type 
systems. They can be regarded as a many particle system with spin degrees of freedom
and as an integrable spin chain of Gaudin type.
We prove that these Hamiltonian systems are superintegrable.

\end{abstract}

\maketitle
\section*{Introduction} 

\subsection*{1}Classical Calogero-Moser (CM) systems were among the first integrable
$N$-particle systems of one dimensional particles \cite{Ca}\cite{Mo} with the potential
$1/(q_i-q_j)^2$. This model was generalized to the potential $1/sh^2(q_i-q_j)$ in \cite{Suth}.
Then it was extended to other root systems and to elliptic potentials in \cite{OP}, to a model involving spin degrees of freedom in \cite{GH}.

There is an extensive literature on spin versions of CM systems. For example in \cite{KBBT}\cite{KrZ} solutions to equations of motion to the elliptic spin Calogero-Moser system were related to special elliptic solutions to the matrix KP hierarchy. The relation to gauge theories were explored
in many papers, see for example \cite{Nekr}\cite{FoR}. A variety of spin CM systems were obtained by L. Feher, see for example \cite{FP1}\cite{FP2}\cite{Fe1}, in particular he
derived important examples related to homogeneous spaces. Two spin
CM systems were studied in \cite{KLOZ1}\cite{KLOZ2}. Integrable chains of relativistic 
spin CM type systems were studied in \cite{ChF}\cite{AO}.

Superintegrability of spin CM systems and of spin Ruijsenaars systems was established
in \cite{R1}. In \cite{R3} the superintegrability of spin CM systems on homogeneous spaces was established.  A family of superintegrable systems
on moduli spaces of flat connections was constructed in \cite{AR}. This family includes systems
studied in \cite{ChF}\cite{AO}. In these particular case the system is also Liouville integrable. 

In this paper  we will describe classical superintegrable system which we call spin Calogero-Moser(CM) chains.
We call them spin CM chains because they combine features of many particle systems (as in CM systems) and of spin chains. We distinguish two cases: a {\it periodic chain} and  an {\it open chain}. 
The periodic case is the classical version of a quantum integrable system where 
joint eigenfunctions of quantum commuting Hamiltonians are trace functions, see \cite{ES0}. 
In this case the spin part of the system reminds a spin chain with periodic boundary conditions. 
In case of rank 1 orbits for $\mathfrak{sl}_n$ these systems are linearized versions of \cite{ChF} and \cite{AO}. In the open case they are a classical version of quantum integrable systems constructed in \cite{SR}\cite{RS}. For these systems the spin part of the system is similar to an open spin chain.

In both cases, i.e. in the periodic and in the open spin Calogero-Moser chains, the phases space
is a stratified symplectic space \cite{LS}, which, in some cases have only one stratum and becomes a
symplectic manifold. 

\subsection*{2} Recall that a {\it superintegrable system} is the structure 
on a symplectic manifold $\cM$ that consists of a Poisson manifold $\cP$, a Poisson manifold $\cB$ with the trivial Poisson structure (i.e. zero Poisson tensor) and two surjective Poisson projections
\begin{equation}\label{DIT}
\cM\stackrel{p_1}{\rightarrow} \cP \stackrel{p_2}{\rightarrow} \cB
\end{equation}
such that $\dim(\cM)=\dim(\cP)+\dim(\cB)$.
For a superintegrable system a generic fiber of  $p_1$ is an isotropic submanifolds of dimension $\dim(\cB)$ and connected components of a generic fiber of $p_2$ is a disjoint union of symplectic leaves of $\cP$. For details see \cite{N}, \cite{R2} and references therein. Here we adopt this notion to the case of stratified symplectic and Poisson spaces in which case $p_1$ and $p_2$ are Poisson mapping between stratifies spaces. In this paper the superintegrability means the balance of dimensions for the big stratum.
How the system behave at smaller strata will be a subject of a separate publication. In the algebraic case,
the appropriate setting is symplectic and Poisson stacks. 

Let $I$ be a Poisson commutative subalgebra of $A=C^\infty(\cM)$ that consists of 
functions which are constant on fibers of $p_2\circ p_1$ (the pull-back of functions on $\cB$ to functions on $\cM$) and $J$ be the Poisson algebra of functions which are constant on fibers of $p_1$ (the pull-back of functions on $\cP$). The condition on $(\cM, \cP, \cB)$ for being a superintegrable system  is equivalent to the following condition on $I\subset J\subset A$. The Poisson algebra $A$ has trivial center, $I\subset A$ is
a Poisson commutative subalgebra, such that  $J$, its centralizer in $A$ maximal possible Gelfand-Kirillov dimension for the given Gelfand-Kirillov dimension of $I$.

The Hamiltonian dynamics generated by a function $H\in I$ is called {\it superintegrable}. 
Any function from $J$ is constant along flow lines of the vector field generated by $H$ and thus, it is an integral of motion for the Hamiltonian dynamics generated by $H$. 
This is why we call elements of the Poisson commutative subalgebra $I$ Hamiltonians and elements of $J$ conservation laws.

\subsection*{3}Throughout this paper $G$ is a split real connected semisimple Lie algebra with finite center which admits a complexification, and  $\Theta\in\textup{Aut}(G)$ is a Cartan involution.
We denote by $K=G^{\Theta}$ the closed subgroup of fixed points of $\Theta$, which is connected and maximal compact.
Let $\theta$ the corresponding Cartan involution\footnote{Recall that an involution $\theta:\g\to\g$ is 
a Cartan involution when the bilinear from $(-\theta(x),y)$ on $\g$ is positive definite.
Here $(\cdot,\cdot)$ is the Killing form.} of $\g$, and $\kf$ the Lie algebra of $K$. The associated Cartan decomposition of
$\mathfrak{g}$ is $\mathfrak{g}=\mathfrak{k}\oplus\mathfrak{p}$, with $\mathfrak{p}$ the $(-1)$-eigenspace of $\theta$.

Let $\af\subset \mathfrak{g}$ be maximally noncompact $\theta$-stable Cartan subalgebra of $\g$. Since $\g$ is split we have $\mathfrak{a}\subseteq\mathfrak{p}$.
On the Lie group level, $A=\exp(\af)\subset G$ is a maximal real split torus in $G$ and $H:=Z_G(A)$, the centraliser of $A$ in $G$, is a Cartan subgroup in $G$ containing $A$.
The exponential map provides an isomorphism $\af\overset{\sim}{\longrightarrow} A$, 
whose inverse we denote by $\log: A\rightarrow\af$.

Consider the root decomposition of $\g$ with respect to the Cartan subalgebra $\af$,
\[
\g=\af\oplus \bigoplus_{\alpha\in R} \g_\alpha
\] 
where $R\subset\mathfrak{a}^*$ is the root system of $\mathfrak{g}$ relative to $\mathfrak{a}$. Choose $e_\alpha\in \g_\alpha$ such that 
\begin{equation}\label{RB}
\theta(e_\alpha)=-e_{-\alpha}
\end{equation}
and $(e_\alpha, e_{-\alpha})=1$ for each $\alpha\in R$, and choose a subset $R_+\subset R$ of positive roots. Let
$W\subset\textup{GL}(\mathfrak{a}^*)$ be the Weyl group of $R$.

The Weyl group $W$ is isomorphic to $N_G(A)/H$, where
$N_G(A)$ is the normaliser of  $A$ in $G$.
Denote by $A_{reg}$ the set of regular elements in $a\in A$,
\[
A_{reg}:=\{a\in A \,\, | \,\, a_\alpha:=e^{\alpha(\log(a))}\not=1 \hbox{ for all } \alpha\in R\}.
\]
It is the union of all the regular $W$-orbits in $A$. A fundamental domain for the $W$-action on $A_{reg}$
\footnote{In case of $SL_n(\RR)$ one can take $A$ to be the diagonal unimodular matrices with positive real entries, and $A_{reg}$ consists of those with distinct diagonal entries.}
 is 
the positive Weyl chamber 
\[
A_+=\{a\in A \,\, | \,\, a_\alpha:=e^{\alpha(\log(a))} > 1 \mbox{ for any } \alpha\in R_+\}.
\]
Let $G^\prime\subset G$ be the set of elements
$g\in G$ which are $G$-conjugate to some element in $A_{reg}$.
The inclusion
$A_{reg}\hookrightarrow G^\prime$ induces a bijection $A_{reg}/W\overset{\sim}{\longrightarrow}G^\prime/G$, with $A_{reg}/W$ the set of $W$-orbits in $A_{reg}$ and
$G^\prime/G$ the set of $G$-conjugacy classes in $G^\prime$.

The Weyl group $W$ is also isomorphic to $N_K(A)/M$, where $N_K(A)=N_G(A)\cap K$ and $M=H\cap K$
(note that $M$ is a finite group since $G$ is split). The inclusion map $A\hookrightarrow G$ induces an isomorphism $A/W\overset{\sim}{\longrightarrow} K\backslash G/K$.
We write $G_{reg}=KA_+K$ for the union of the double $(K,K)$-cosets intersecting $A_{reg}$.

\subsection*{4} The phase space of a periodic spin Calogero-Moser chain corresponding to a collection $\cO=\{\cO_1,\dots, \cO_n\}$ of coadjoint orbits $\cO_i\subset \g^*$ is the regular part of the symplectic leaf $\cS(\cO)$ of the stratified Poisson space $T^*(G^{\times n})/G_n$, with the action of the gauge group $G_n=G^{\times n}$ the lift of a twisted conjugation action on $G^{\times n}$(see section \ref{g-act-per}).\footnote{In this paper we assume that all quotients $X/H$ are GIT quotients.} Here we assume that each of $\cO_i$ is non-trivial, i.e. $\cO_i\neq \{0\}$
 These symplectic leaves are obtained by the Hamiltonian reduction, as it is described in section \ref{g-act-per}. As a stratified symplectic space
 \[
 \cS(\cO)\simeq \{(x_1,\dots, x_n, g)\in {\g^*}^{\times n}\times G\,\, | \,\, x_1-\textup{Ad}_{g^{-1}}^*x_n\in\mathcal{O}_1,\,\, x_{i}-x_{i-1}\in\mathcal{O}_i,\,\, i=2,\ldots,n \}/G,
 \]
see section \ref{ph-space-reg}.\footnote{There are $n$ such natural isomorphisms $\varphi_j$ ($1\leq j\leq n$), see section \ref{g-red}. In the introduction we use $\varphi_n$.} 
Its regular part is defined as the intersection $\cS(\cO)_{reg}=\cS(\cO)\cap ({\g^*}^{\times n}\times G')/G$.\footnote{A better way to think about the periodic spin Calogero-Moser system for a real split simple Lie group $G$ is to define $\cS(\cO)_\CC$ is complex algebraic setting and then to take the corresponding real slice. This will be addressed in another publication.}
The regular part has the following structure as a symplectic
manifold, $\cS(\cO)_{reg}\simeq \bigl(\nu_{\cO}^{-1}(0)/H\times T^*A_{reg}\bigr)/W$, where $\nu_\cO:\cO_1\times\cdots\cO_n\rightarrow\af^*$ is the moment map
for the diagonal coadjoint action of $H$ on $\cO_1\times\cdots\times\cO_n$ (see section \ref{ph-space-reg}).

Trivialization  of $T^*G$ by right translations gives an isomorphism $T^*(G^{\times n})\simeq {\g^*}^{\times n} \times G^{\times n}$ and the Poisson projection $T^*(G^{\times n})/G_n\to {(\g^*/G)}^{\times n}$, which is the projection to the cotangent directions followed by the quotienting with respect to the coadjoint action of $G^{\times n}$. Poisson commuting Hamiltonians of the periodic spin Calogero-Moser system are functions on $T^*(G^{\times n})/G_n$
which are constant on fibers of this Poisson projection. More precisely, tPoisson commuting Hamiltonians are such functions restricted to $\cS(\cO)_{reg}$. 

Consider the increasing set of natural numbers $2=d_1\leq \cdots\leq d_r$ with $r=\textup{rank}(\g)$ and $d_k-1$ the exponents of $\g$. Let $c_{d_k}$ be the nonzero coadjoint invariant functions on $\g^*$ of degree $d_k$,
known as Casimir functions. The function $c_2$ is the quadratic Casimir of $\g$. 
Let  $H^{(l)}_{d_k}$ be the function on $({\g^*/G})^{\times n}$ which is $c_{d_k}$ on the $l$-th factor and
constant on all other factors.

Let us denote vectors in $\cO_k\subset \g^*$ by $\mu^{(k)}$, its Cartan component by $\mu_0^{(k)}$ and set
$\mu_\alpha^{(k)}=\mu^{(k)}(e_{-\alpha})$ for $\alpha\in R$. Denote 
$(p,a)$ points on $T^*A\simeq \af^*\times A$. Now let us describe quadratic Hamiltonians
in terms of these variables.

The $n$-th quadratic Hamiltonian is the spin Calogero-Moser Hamiltonian. It has particularly simple form:
\[
H^{(n)}_2=\frac{1}{2} (p,p)-\sum_{\alpha>0}\frac{\mu_\alpha \mu_{-\alpha}}{2\textup{sh}^2(q_\alpha)}
\]
where 
we used the parametrization $a_\alpha=e^{q_\alpha}$ (so $q_\alpha=\alpha(\log(a))$) 
and $\mu_\alpha=\mu_\alpha^{(1)}+\cdots+\mu_\alpha^{(n)}$, and $(\cdot,\cdot)$ is the Euclidean form on $\af^*$ obtained by dualising the restriction of the Killing form of $\g$ to $\af$.

The differences $D_k=H^{(k)}_2-H^{(k-1)}_2$ for $1<k\leq n$ are classical analogs of topological
Knizhnik-Za\-mo\-lod\-chi\-kov-Bernard differential operators,
\begin{equation}\label{Dk-per-int}
D_k=(\mu^{(k)}_0,p)-\sum_{l=1}^{k-1} r_{lk}+\sum_{l=k+1}^nr_{kl}
\end{equation}
where $r_{kl}$ for $k\not=l$ is a classical version of the Felder's dynamical $r$-matrix \cite{F},
\begin{equation}\label{F-r-int}
r_{kl}=-\frac{1}{2}(\mu^{(k)}_0, \mu^{(l)}_0)+\sum_{\alpha}\frac{ \mu^{(k)}_{-\alpha} \mu^{(l)}_{\alpha}}{a_\alpha-1}
\end{equation}
and $\sum_\alpha$ stands for the sum over all the roots $\alpha\in R$. This explicit form of $D_k$ is derived in section \ref{ph-space-reg}.

The superintegrability of this system is described in section \ref{pCM-sup}.
The projection method for constructing solutions of equations of motion and angle variables are described in section \ref{solutions}.

One can choose $G$ to be a the maximal compact real form of the complexification $G_\CC$. In this case the integrable system is similar, but hyperbolic functions gets replaces by the trigonometric ones. The structure of the phase space is again a stratified symplectic space. The superintegrability of the quantum counterpart of such compact case is proven in \cite{R4}.

\subsection*{5} The phase space of an open Calogero-Moser spin chain is
the regular part of a symplectic leaf of the Poisson manifold $T^*(G^{\times n+1})/(K\times G^{\times n}\times K)$ where the action of the gauge group $K\times G^{\times n}\times K$ is described
in section \ref{op-g-reduct}, and $K\subset G$ is as above. Such symplectic leaves are given by the Hamiltonian reduction. They are parametrized by collections of coadjoint orbits $\cO=\{\cO^K_\ell, \cO_1, \dots, \cO_{n}, \cO^K_r\}$
where $\cO_i\subset \g^*$ and $\cO^K_{\ell,r}\subset \kf^*\subset \g^*$ are coadjoint orbits. We assume that none of $\cO_i$ is trivial, i.e. $\cO_i\neq \{0\}$. 

We denote the corresponding symplectic leaf by $\cS(\cO)$. It is a stratified symplectic space. Using Cartan decomposition  $G=KAK$  and a "gauge fixing fixing", we define the regular part $\cS(\cO)_{reg}$ of $\cS(\cO)$ as the strtatum 
\[
S(\cO)_{reg}\simeq (T^*A_{reg}\times \cO^K_1\times \cO_1\times\dots\times \cO_{n}\times \cO^K_2))/N_K(A),
\] 
where on the right we have a natural product symplectic structure.

Similarly to the periodic case, quadratic Hamiltonians can be computed explicitly in terms of Cartan components $\mu^{(k)}_0$ and root coordinates $\mu^{(k)}_\alpha$ of vectors $\mu^{(k)}\in \cO_k$,  
coordinates $\mu'_{[\alpha]}, \mu''_{[\alpha]}$ on $\cO_\ell^K$ and $\cO_r^K$ respectively (in the basis elements $e_{[\alpha]}=e_{-\alpha}-e_\alpha\in \kf\subset \g$ for $\alpha\in R_+$), and $(p,a)\in T^*A_{reg}$.

Assuming the gauge fixing $\phi_n$ (see section \ref{op-g-reduct}) we have
\[
H^{(n)}_2=\frac{1}{2}(p,p)+\sum_{\alpha>0}\frac{(a_\alpha\mu_{[\alpha]}^\prime+\mu_{[\alpha]}^{\prime\prime}+a_\alpha(\mu_\alpha-\mu_{-\alpha}))(a_\alpha^{-1}\mu_{[\alpha]}^\prime+\mu_{[\alpha]}^{\prime\prime}+a_\alpha^{-1}(\mu_\alpha-\mu_{-\alpha}))}{(a_\alpha-a_{-\alpha})^2}
\]
For other quadratic Hamiltonians the differences
\[
D_k=H_2^{(k)}-H_2^{(k-1)}\qquad\quad (1\leq k\leq n)
\]
are more interesting. They are classical analogs of boundary Knizhnik-Zamolodchikov-Bernard  differential operators \cite{SR}\cite{RS}. We have the following formula for $D_k$:
\[
D_k=(\mu^{(k)}_0,p)-\sum_{l=1}^{k-1}(r_{lk}+r_{lk}^{\theta_l})+(\sum_\alpha K_\alpha\mu^{(k)}_{-\alpha}-\kappa_k)
+\sum_{l=k+1}^n(r_{kl}-r_{kl}^{\theta_k}).
\]
Here $r_{kl}$ for $k\not=l$ now is Felder dynamical $r$-matrix rescaled in $a\in A_{reg}$,
\begin{equation}\label{F-r-b}
r_{kl}=-\frac{1}{2}(\mu^{(k)}_0, \mu^{(l)}_0)+\sum_{\alpha}\frac{ \mu^{(k)}_{-\alpha}\mu^{(l)}_{\alpha}}{a_\alpha^2-1},
\end{equation}
$\theta_k$ is the transpose of the Cartan involution acting on $\mu^{(k)}$,
\[
\kappa_k=\frac{1}{2} (\mu^{(k)}_0,\mu^{(k)}_0)+\sum_\alpha\frac{(\mu^{(k)}_\alpha)^2}{1-a_{\alpha}^2}
\]
and 
\begin{equation}\label{Ka}
K_\alpha=\frac{a_\alpha\mu'_{[\alpha]}+\mu''_{[\alpha]}}{a_\alpha-a_{\alpha}^{-1}}.
\end{equation}
The differences $D_k=H_2^{(k)}-H_2^{(k-1)}$ are classical analogs of boundary KZB operators derived in \cite{SR}\cite{RS}. The superintegrability of open spin CM chains is proven in section \ref{op-sint}. The projection method for solving equations of motion and angle coordinates are described in section \ref{op-proj-dyn}.
  
\subsection*{6} The structure of the paper is as follows. In section \ref{pCM} we construct periodic
spin CM chains by the Hamiltonian reduction and prove the superintegrability. In section \ref{g-act-per}
we describe the phase space of such a system. In sections \ref{g-red}, \ref{ph-space-reg} we describe the regular part of the phase space. Hamiltonians of a periodic spin CM chain, restricted to the regular 
part of the phase space are described in section \ref{Ham-per}. The superintegrability of a periodic
spin CM chain is proven in section \ref{pCM-sup}. In section \ref{solutions} solutions to equations of motion are described algebraically by the projection method, and angle variables are described. In section \ref{open-CM} we focus on open spin CM chains. In section \ref{ph-space-open}
we describe phase spaces. In section \ref{op-g-reduct}, \ref{op-s-leaf} we describe the regular part 
of the phase space. Hamiltonians of an open spin CM chain, restricted to the regular 
part of the phase space are described in section 
\ref{qu-Ham}. 
The superintegrability of an open spin CM chain is proven in section \ref{op-sint}. In section  \ref{op-proj-dyn} solutions to equations of motion are described algebraically
by the projection method, and angle variables are described.
 In the conclusion (section \ref{concl}) we discuss some open problems and
describe in details periodic CM spin chain for $SL_N$ with orbits of rank 1.
In Appendix 
\ref{R3} we compare 
our symplectic leaves with the ones from \cite{R3}.

Throughout this paper we will focus on split real semisimple Lie groups. However, since all constructions are algebraic they extend (with appropriate modifications) to the complex algebraic case. The non-split real case will be the subject of a separate publication (see \cite{RS} for the quantum case). 
Another important real case is when $G$ is compact, 
which can be deduced from the complex algebraic case by restriction to a compact real form.
The structure of phase spaces as stratified symplectic spaces will be explored further in \cite{CJRX}.


\subsection*{Acknowledgments}This paper was started as a joint project with Jasper Stokman. The author is grateful to Jasper for many discussions and for the collaboration on this paper. He also would like to thank Vladimir Fock, Eva Miranda and Hessel Postuma for important discussions and remarks and to Zhuo Chen, Kai Jiang and Husileng Xiao  for
discussions on stratified symplectic spaces. N.R. want to thank ITS-ETH for the hospitality, where the bulk of this work was completed. The work of N.R. was supported by the NSF grant DMS-1902226, by the Dutch research council (NWO 613.009.126), and by the grant RFBR No. 18-01-00916. 

\section{Periodic spin Calogero-Moser chains}\label{pCM}
\subsection{The phase space as the Hamiltonian reduction} \label{g-act-per}
Here we will describe the phase space of a periodic spin Calogero-Moser chain as a Hamiltonian reduction of $T^* (G^{\times n})$. Let us start with the description of these symplectic spaces.

Consider the manifold $T^*(G^{\times n})$ with the standard symplectic structure.
The cotangent bundle over a Lie group can be trivialized by right translations, which gives an isomorphism of vector bundles
\[
T^*(G^{\times n})\simeq (T^*G)^{\times n}\simeq {\g^*}^{\times n}\times G^{\times n}
\]
We will choose this trivialization throughout the paper.

The Lie group $G_n:=G^{\times n}$ acts naturally on itself by left and right translations. 
Lifting these actions to $T^*(G^{\times n})$, after the trivialization of the cotangent bundle, 
we can write the action by left translations as:
\[
h_L(x,g)=(Ad_{h_1}^*(x_1), Ad_{h_2}^*(x_2) \dots, Ad_{h_{n}}^*(x_n), h_1g_1, h_2g_2, \dots, h_{n}g_n)
\]
and the action by right translations as
\[
h_R(x,g)=(x_1,\dots, x_n, g_1h_1^{-1}, \dots, g_nh_n^{-1})
\]
Both these actions are Hamiltonian with moment maps
\[
\mu_L(x,g)=(x_1,x_2,\dots, x_n)
\]
and
\[
\mu_R(x,g)=(-Ad_{g_1^{-1}}^*(x_1), \dots, -Ad_{g_n^{-1}}^*(x_n))
\]
respectively.

Actions by left and right translations can be twisted by permutations. In particular, we can twist the action by left translations by a cyclic permutation.
Combining the twisted left action with the non-twisted right action we obtain the "gauge action" of $G_n$ on $G^{\times n}${\footnote{One can twist both left and right actions
by a permutation. This leads to other superintegrable systems.}
\[
h(g_1,\dots, g_n)=(h_1g_1h_2^{-1},h_2g_2h_3^{-1},\dots, h_{n}g_nh_1^{-1})
\]

Lifting the twisted conjugation action of $G_n$ on $G^{\times n}$ to $T^*(G^{\times n})$ we obtain the "gauge action" on the cotangent bundle:
\begin{equation}\label{tad-act}
h(x,g)=(Ad_{h_1}^*(x_1), Ad_{h_2}^*(x_2), \dots, Ad_{h_{n}}^*(x_n), h_1g_1h_2^{-1}, h_2g_2h_3^{-1}, \dots, h_{n}g_nh_1^{-1})
\end{equation}
Because  this is the diagonal action for two Hamiltonian actions, the gauge action is also Hamiltonian with the moment map $\mu: T^*(G^{\times n})\to {\g^*}^{\times n}$:

\begin{equation}\label{tad-mmap}
\mu(x,g)=\mu_L(x,g)+\mu_R^{tw}(x,g)=(x_1-Ad_{g_n^{-1}}^*(x_n), x_2-Ad_{g_1^{-1}}^*(x_1), \dots, x_n-Ad_{g_{n-1}^{-1}}^*(x_{n-1}))
\end{equation}
where $\mu_R^{tw}$ is the right
moment map, twised by cyclic permutation. 

Because the gauge action \eqref{tad-act} of $G_n$
is Hamiltonian, the quotient space $T^*(G^{\times n})/G_n$
 is a Poisson space.\footnote{This space is singular. Having in mind classical-quantum correspondence we need the algebra of functions on $T^* G^{\times n}/G_n$. Thus, by the quotient space we will always mean the GIT quotient. By definition, functions on $T^*(G^{\times n})/G_n$ are $G_n$-invariant functions on $T^*(G^{\times n})$.} 
Symplectic leaves of $T^*(G^{\times n})/G_n$
are given by the Hamiltonian reduction with respect to the moment map (\ref{tad-mmap}).
Let $\cO_1,\dots, \cO_n$ be coadjoint orbits in $\g^*$, then the corresponding symplectic leaf 
in $T^*(G^{\times n})/G_n$
is
\begin{equation}\label{SOoriginal}
\cS(\cO)=\mu^{-1}(\cO_1\times\dots\times \cO_n)/G_n=\{(x,g)\in {\g^*}^{\times n}\times G^{\times n}| x_{i}-Ad_{g_{i-1}^{-1}}^*(x_{i-1})\in \cO_i\}/G_n
\end{equation}
where $G_n$ acts by the gauge transformations (\ref{tad-act}) and the indices $i$ should be taken modulo $n$.

On each of these symplectic leaves we will construct a superintegrable system which we will call a {\it periodic spin Calogero-Moser chain}.

\subsection{The gauge fixing}\label{g-red} 
Let us fix $i\in1,\dots, n$ and $g=(g_1,\ldots,g_n)\in G^{\times n}$. Let $h\in G_n$ such that 
\begin{equation*}
h_j=
\begin{cases}
h_ig_{i-1}^{-1}\cdots g_{j+1}^{-1}g_j^{-1}\quad &\hbox{ for }\,\, 1\leq j<i,\\
h_ig_{i-1}^{-1}\cdots g_2^{-1}g_1^{-1}g_n^{-1}\cdots g_{j+1}^{-1}g_j^{-1}\quad &\hbox{ for }\,\, i<j\leq n.
\end{cases}
\end{equation*}
Denote such element of $G_n$  by $h_g$ (we suppress the dependence on $i$).

It is easy to check that the gauge transformation of $g=(g_1,\dots, g_n)$ 
by the element $h_g$ brings it to $(1,\ldots,1,h_i(g_ig_{i+1}\cdots g_ng_1g_2,\ldots g_{i-1})h_i^{-1},1,\ldots,1)$, with the $i^{\textup{th}}$-entry being the nontrivial entry.
This identifies the $G_n$ gauge orbit through $g=(g_1,\dots, g_n)$ with the $G$-conjugation orbit through $g_1\cdots g_n$. It thus gives an ($i$-independent) isomorphism
\[
G^{\times n}/G_n\overset{\sim}{\longrightarrow} G/G,
\]
where $G/G$ denotes the set of conjugacy classes in $G$. On the cotangent bundles the gauge fixing with gives the isomorphism
$\varphi_i: \bigl(\mathfrak{g}^{*\times n}\times G^{\times n}\bigr)/G_n\overset{\sim}{\longrightarrow}\bigl(\mathfrak{g}^{*\times n}\times G\bigr)/G$ mapping the $G_n$-orbit
$G_n(x,g)$ through $(x,g)\in\mathfrak{g}^{*\times n}\times G^{\times n}$ to the $G$-orbit through 
\[
\bigl(Ad_{g_i^{-1}\cdots g_1^{-1}}^*(x_1),\ldots,Ad_{g_i^{-1}}^*(x_i),Ad_{g_i^{-1}\cdots g_1^{-1}g_n^{-1}\cdots g_{i+1}^{-1}}^*(x_{i+1}),
\ldots,Ad_{g_i^{-1}\cdots g_1^{-1}g_n^{-1}}^*(x_n),g_{i+1}\cdots g_ng_1\cdots g_i\bigr)
\]
(for $i=n$ this should be read as $\bigl(Ad_{g_n^{-1}\cdots g_1^{-1}}^*(x_1),\ldots,Ad_{g_n^{-1}}^*(x_n),g_1\cdots g_n\bigr)$).
Here $G$ is acting diagonally on $\mathfrak{g}^{*\times n}\times G$ via the coadjoint action on $\g^*$ and the conjugation action on $G$. From now on we will work with the isomorphism $\varphi_n$.


\subsection{The regular part of the phase space} \label{ph-space-reg}


The image of the symplectic leaf $\cS(\cO)$ under the isomorphism $\varphi_n$  is
\[
\cS(\cO)=\{(z_1,\dots, z_n, g)\in \g^{*\times n} \times
G\,\, |\,\, z_1-Ad^*_{g^{-1}}z_n\in \cO_1,\,\, z_{i}-z_{i-1}\in \cO_{i},\,\,\, i=2,\dots,n\}/G.
\]
Define the {\it regular} part $\cS(\cO)_{reg}\subset \cS(\cO)$ of the phase space as $\cS(\cO)\cap (\mathfrak{g}^{*\times n}\times G^\prime)/G$.
On $\cS(\cO)_{reg}$ we can choose a representative where $g$ is in the regular part $A_{reg}$ of the real split torus $A$ in $G$:
$g=bab^{-1}, z_i=\textup{Ad}_b^*x^{(i)}$ with $a\in A_{reg}$. 
Then 
we have
\begin{equation*}
\begin{split}
\cS(\cO)_{reg}=\{(x^{(1)},\dots, x^{(n)}, a)\in \g^{*\times n} \times A_{reg}\,\,|\,\, &x^{(1)}-Ad^*_{a^{-1}}x^{(n)}\in \cO_1,\\
&\,\,\,x^{(i)}-x^{(i-1)}\in\cO_i,\,\,  i=2,\dots, n\}/N_G(A).
\end{split}
\end{equation*}

Identify $\mathfrak{g}^*\simeq\mathfrak{g}$ and $\mathfrak{a}^*\simeq\mathfrak{a}$ via the Killing form of $\g$. The element $y\in\mathfrak{g}^*$ then corresponds to $y_0+\sum_\alpha y_\alpha e_\alpha$, where $y_0$ is the element in $\mathfrak{a}$ corresponding to $y\vert_{\mathfrak{a}}$ and 
$y_\alpha=y(e_{-\alpha})$. 
Let $\mu^{(j)}\in \cO_j$ be vectors $\mu^{(1)}=x^{(1)}-Ad^*_{a^{-1}} x^{(n)}$ and
$\mu^{(i)}=x^{(i)}-x^{(i-1)}$ for $i=2,\dots, n$. For coordinates  $x^{(i)}_\alpha$ and $\mu^{(i)}_\alpha$  of vectors $x^{(i)}$ and $\mu^{(i)}$ we then have
\[
x_\alpha^{(1)}-a_\alpha^{-1} x^{(n)}_\alpha=\mu^{(1)}_\alpha,\qquad\quad x^{(i)}_\alpha-x^{(i-1)}_\alpha=\mu^{(i)}_\alpha, \qquad i=2,\dots, n.
\]
For the Cartan components we have
\[
x^{(i)}_0-x^{(i-1)}_0=\mu^{(i)}_0, \qquad i=1,\dots, n,
\]
with the index $i$ taken to be modulo $n$.
 
Solving these equations for $x^{(i)}$ we have
\begin{equation}\label{solve}
\begin{split}
x^{(i)}_\alpha&=\frac{a_\alpha(\mu_\alpha^{(1)}+\mu_\alpha^{(2)}+\cdots+\mu_\alpha^{(i)})+\mu_\alpha^{(i+1)}+\mu^{(i+2)}+\cdots+\mu_\alpha^{(n)}}{a_\alpha-1},\\
x^{(i)}_0&=x^{(1)}_0+ \mu^{(2)}_0 +\dots +\mu^{(i)}_0=x^{(n)}_0-\mu^{(n)}_0 -\dots -\mu^{(i+1)}_0
\end{split}
\end{equation}
and we have the constraint
\begin{equation}\label{constraint}
\mu^{(1)}_0 +\dots +\mu^{(n)}_0=0.
\end{equation}

This gives an isomorphism 
\begin{equation}\label{radred}
\cS(\cO)_{reg}\overset{\sim}{\longrightarrow} \bigl(\nu_{\cO}^{-1}(0)/H\times T^*A_{reg}\bigr)/W
\end{equation}
which preserves the natural symplectic structures,
where $\nu_\cO: \cO_1\times\cdots\times\cO_n\rightarrow\mathfrak{a}^*$ is the moment map $(\mu^{(1)},\ldots,\mu^{(n)})\mapsto(\mu^{(1)}+\cdots+\mu^{(n)})\vert_{\mathfrak{a}}$ for the diagonal action of $H$ on the product $\cO_1\times\cdots\times\cO_n$ of coadjoint orbits, and $W=N_G(A)/H$ acts diagonally on $\nu_{\cO}^{-1}(0)/H\times T^*A_{reg}$.
The isomorphism \eqref{radred} maps the $N_G(A)$-orbit through $\bigl(x^{(1)},\ldots,x^{(n)},a\bigr)$ to the $W$-orbit through $\bigl(H(x^{(1)}-Ad_{a^{-1}}^*x^{(n)},x^{(2)}-x^{(1)},\ldots,x^{(n)}-x^{(n-1)}),x_0^{(n)},a\bigr)$,
where we used the trivialisation 
$T^*A_{reg}\simeq\mathfrak{a}\times A_{reg}$. The inverse maps the $W$-orbit through $\bigl(H(\mu^{(1)},\ldots,\mu^{(n)}),p,a\bigr)$ to the $N_G(A)$-orbit through
$(x^{(1)},\ldots,x^{(n)},a)$, with
\begin{equation}\label{relationxmu}
x^{(i)}=p-\mu_0^{(n)}-\cdots-\mu_0^{(i+1)}+\sum_\alpha\left(\frac{a_\alpha(\mu_\alpha^{(1)}+\cdots+\mu_\alpha^{(i)})+\mu_\alpha^{(i+1)}+\cdots+\mu_\alpha^{(n)}}
{a_\alpha-1}\right)e_\alpha,
\end{equation}
where we use the identification $\g\simeq\g^*$ via the Killing form.


\subsection{Hamiltonians of a periodic spin CM chain}\label{Ham-per}
After the trivialization of the cotangent bundle by translations, we have a natural projection:
\begin{equation}\label{pr-1}
T^*(G^{\times n})\simeq {\g^*}^{\times n}\times G^{\times n}\to {\g^*}^{\times n}
\end{equation}
which is simply the projection to the first factor. 
This projection depends on the trivialization. In this paper we alway assume that we use the trivialization by right translations.
However, the corresponding projection of quotient spaces 
\begin{equation}\label{pr-2}
T^*(G^{\times n})/G_n\to ({\g^*/G})^{\times n}
\end{equation}
does not depend on the trivialization and in this sense is canonical.

The projection (\ref{pr-2}) is Poisson\footnote{One of the reasons for this is that the equation (\ref{pr-1}) is the moment map for
the left diagonal action of $G^{\times n}$ on the cotangent bundle.} with the trivial Poisson structure on ${(\g^*/G)}^{\times n}$.
Thus the $G^{\times n}$-invariant functions on ${\g^*}^{\times n}$ give rise to
a Poisson commutative subalgebra 
in the algebra of functions on $T^*(G^{\times n})/G_n$. The restriction of these functions to the symplectic leaf $\cS(\cO)$ gives the algebra of Poisson commuting functions on it.
This is the subalgebra of {\it Hamiltonians of the periodic spin Calogero-Moser chain}.

Now let us describe the restriction of the Hamiltonians corresponding to quadratic Casimir functions  
\[
H_2^{(k)}(x,g)=\frac{1}{2}\bigl(x^{(k)},x^{(k)}\bigr)=\frac{1}{2}\bigl(x_0^{(k)},x_0^{(k)}\bigr)+\sum_{\alpha>0}x^{(k)}_\alpha x^{(k)}_{-\alpha}\qquad
(1\leq k\leq n)
\]
to the regular part of $\cS(\cO)$, where $x=(x^{(1)},\ldots,x^{(n)})\in\g^{*\times n}$ and $g\in G^{\times n}$. Consider the functions
\[
D_k=H_{2}^{(k)}-H_2^{(k-1)}\qquad\quad (1<k\leq n)
\]
which we call {\it Knizhnik-Zamolodchikov-Bernard (KZB) Hamiltonians}\footnote{The proper name would be {\it constant Knizhnik-Zamolodchikov-Bernard Hamiltonians} emphasizing  the fact that they are related to finite dimensional simple Lie algebras, not to the affine Kac-Moody algebras. See for example references \cite{F}\cite{ES}\cite{EV}\cite{S}.}. 

\begin{theorem}\label{thmradcyclic}  The restriction of the KZB Hamiltonians to $\cS(\cO)_{reg}$ can be written as
\begin{equation}\label{Dk-per}
D_k=(\mu^{(k)}_0, p)-\sum_{l=1}^{k-1} r_{lk}+\sum_{l=k+1}^{n}r_{kl}
\end{equation}
where $r_{kl}$ for $k\not=l$ is the classical version of the Felder's dynamical $r$-matrix \cite{F}:
\begin{equation}\label{F-r}
r_{kl}=-\frac{1}{2}(\mu^{(k)}_0, \mu^{(l)}_0)+\sum_{\alpha}\frac{ \mu^{(k)}_{-\alpha}\mu^{(l)}_{\alpha}}{a_\alpha-1}.
\end{equation}
\end{theorem}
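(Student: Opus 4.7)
The plan is to substitute the explicit gauge-fixed parametrization \eqref{solve} of $\cS(\cO)_{reg}$ into $H_2^{(k)}=\tfrac12(x^{(k)},x^{(k)})$ and telescope. A useful reorganisation of \eqref{solve}, obtained by adding and subtracting $A^{(k)}_\alpha$ in the numerator, is
\[
x^{(k)}_\alpha=A^{(k)}_\alpha+y_\alpha,\qquad x^{(k)}_0=p-\sum_{l>k}\mu^{(l)}_0,
\]
where $A^{(k)}_\alpha:=\sum_{j=1}^k\mu^{(j)}_\alpha$, $\nu_\alpha:=A^{(n)}_\alpha$, $y_\alpha:=\nu_\alpha/(a_\alpha-1)$, and $p:=x^{(n)}_0$; the key feature is that $y_\alpha$ is independent of $k$. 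Setting $\tilde A^{(k)}:=\sum_\alpha A^{(k)}_\alpha e_\alpha$, $\tilde\mu^{(k)}:=\sum_\alpha \mu^{(k)}_\alpha e_\alpha$, $Y:=\sum_\alpha y_\alpha e_\alpha$ and using $\af\perp\bigoplus_\alpha \g_\alpha$ under the Killing form, one obtains
\[
H_2^{(k)}=\tfrac12(x_0^{(k)},x_0^{(k)})+\tfrac12(\tilde A^{(k)},\tilde A^{(k)})+(\tilde A^{(k)},Y)+\tfrac12(Y,Y).
\]
The $(Y,Y)$ piece is $k$-independent and drops out of $D_k$, so using $\tilde A^{(k)}-\tilde A^{(k-1)}=\tilde\mu^{(k)}$ the difference telescopes to
\[
D_k=\tfrac12\bigl[(x_0^{(k)},x_0^{(k)})-(x_0^{(k-1)},x_0^{(k-1)})\bigr]+(\tilde A^{(k-1)},\tilde\mu^{(k)})+\tfrac12(\tilde\mu^{(k)},\tilde\mu^{(k)})+(\tilde\mu^{(k)},Y).
\]

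For the Cartan contribution, direct expansion yields $(\mu_0^{(k)},p)-\tfrac12(\mu_0^{(k)},\mu_0^{(k)})-\sum_{l>k}(\mu_0^{(k)},\mu_0^{(l)})$, which, after invoking the moment-map constraint $\sum_l\mu_0^{(l)}=0$ from \eqref{constraint}, matches the Cartan part of $-\sum_{l<k}r_{lk}+\sum_{l>k}r_{kl}$. For the root contributions, write $P_{kl}:=\sum_\alpha \mu^{(k)}_{-\alpha}\mu^{(l)}_\alpha/(a_\alpha-1)$ for the non-Cartan part of $r_{kl}$. The key tool is the partial-fraction identity $1/(a_\alpha^{-1}-1)=-1-1/(a_\alpha-1)$ (forced by $a_{-\alpha}=a_\alpha^{-1}$), which, together with the relabelling $\alpha\leftrightarrow-\alpha$ in the finite sums, yields the symmetry
\[
P_{kl}+P_{lk}=-\sum_\alpha\mu^{(k)}_\alpha\mu^{(l)}_{-\alpha}\qquad(k,l\in\{1,\dots,n\}),
\]
with the diagonal specialisation $P_{kk}=-\tfrac12\sum_\alpha\mu^{(k)}_\alpha\mu^{(k)}_{-\alpha}$. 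Applying the same partial-fraction identity to $(\tilde\mu^{(k)},Y)$ produces a sum $\sum_{j=1}^n P_{kj}$ together with polynomial cross-terms $\sum_\alpha\mu^{(k)}_\alpha\mu^{(j)}_{-\alpha}$; using the symmetry once more on the RHS of \eqref{Dk-per} to convert $-\sum_{l<k}P_{lk}$ into $\sum_{l<k}P_{kl}+\sum_{l<k}\sum_\alpha\mu^{(k)}_\alpha\mu^{(l)}_{-\alpha}$ brings both sides into the same form.

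The main obstacle is the bookkeeping in the root part: several families of polynomial terms $\sum_\alpha \mu^{(k)}_\alpha\mu^{(l)}_{-\alpha}$ appear on both sides with opposing signs, and the apparently extraneous diagonal term $P_{kk}$ (not present in \eqref{Dk-per}) is precisely what absorbs the telescoping Casimir $\tfrac12(\tilde\mu^{(k)},\tilde\mu^{(k)})$ via $2P_{kk}=-\sum_\alpha\mu^{(k)}_\alpha\mu^{(k)}_{-\alpha}$. Once this cancellation is recognised, matching the two sides is routine.
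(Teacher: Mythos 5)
Your proposal is correct and follows essentially the same route as the paper: telescope $D_k=(\mu^{(k)},x^{(k-1)})+\tfrac12(\mu^{(k)},\mu^{(k)})$, substitute the explicit radial formula \eqref{solve} for $x^{(k-1)}_\alpha$, and use the $\alpha\leftrightarrow-\alpha$ partial-fraction identity to recast the kernels as Felder $r$-matrices (your symmetry $P_{kl}+P_{lk}=-\sum_\alpha\mu^{(k)}_\alpha\mu^{(l)}_{-\alpha}$ is exactly the combination of the two displayed identities in the paper's proof, and your observation that the diagonal term $P_{kk}$ absorbs $\tfrac12(\tilde\mu^{(k)},\tilde\mu^{(k)})$ is the same cancellation the paper uses implicitly). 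The only cosmetic difference is your $k$-independent split $x^{(k)}_\alpha=A^{(k)}_\alpha+y_\alpha$, which cleans up the telescoping but does not change the argument.
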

\begin{remark}
Note that (\ref{F-r}) can also be written as
\[
r_{kl}=-\frac{1}{2}(\mu^{(k)}_0, \mu^{(l)}_0)+\sum_{\alpha>0}\frac{ \mu^{(k)}_{-\alpha}\mu^{(l)}_{\alpha}}{a_\alpha-1}-\sum_{\alpha>0}\frac{a_\alpha\mu^{(k)}_{\alpha} \mu^{(l)}_{-\alpha}}{a_\alpha-1}.
\]
\end{remark}
\begin{proof}
We need to show that formula \eqref{Dk-per} gives the expression of $D_k$ in terms of the coordinates on $\cS(\cO)_{reg}$ of $\bigl(\nu_{\cO}^{-1}(0)/H\times T^*A_{reg}\bigr)/W$, obtained from the isomorphism \eqref{radred}. In particular, let
$\bigl(H(\mu^{(1)},\ldots,\mu^{(n)}),p,a\bigr)\in \nu_{\cO}^{-1}(0)/H\times\mathfrak{a}^*\times A_{reg}$ and let $\bigl((x^{(1)},\ldots,x^{(n)}),(1,\ldots,1,a)\bigr)$ be the corresponding point in
$\g^{*\times n}\times G^{\times n}$, with $x^{(i)}$ given by \eqref{relationxmu}.
Taking into account the relation $x^{(k)}-x^{(k-1)}=\mu^{(k)}$ between the $x^{(i)}$ and the $\mu^{(j)}$ we have 
\begin{equation*}
\begin{split}
D_k&=\bigl(\mu^{(k)},x^{(k-1)}\bigr)+\frac{1}{2}\bigl(\mu^{(k)},\mu^{(k)}\bigr)\\
&=\bigl(\mu_0^{(k)},x_0^{(k-1)}+\frac{1}{2}\mu_0^{(k)}\bigr)+
\sum_\alpha x_\alpha^{(k-1)}\mu_{-\alpha}^{(k)}+\sum_{\alpha>0}\mu_\alpha^{(k)}\mu_{-\alpha}^{(k)}.
\end{split}
\end{equation*}
Substitute here the expression \eqref{solve} 
for $x^{(k-1)}_\alpha$ in terms of the $\mu_\alpha^{(j)}$:
\[
D_k=\bigl(\mu_0^{(k)},x_0^{(k-1)}+\frac{1}{2}\mu_0^{(k)}\bigr)
+\sum_{l=1}^{k-1} \sum_\alpha \frac{a_\alpha\mu_\alpha^{(l)}\mu^{(k)}_{-\alpha}}{a_{\alpha}-1}+
\sum_{l=k+1}^{n} \sum_\alpha \frac{\mu_\alpha^{(l)}\mu^{(k)}_{-\alpha}}{a_{\alpha}-1}.
\]
From here using the identities
\[
\sum_\alpha \frac{a_\alpha\mu_\alpha^{(l)}\mu^{(k)}_{-\alpha}}{a_{\alpha}-1}=-r_{lk}-\frac{1}{2}(\mu_0^{(k)},\mu_0^{(l)})
\]
and 
\[
\sum_\alpha \frac{\mu_\alpha^{(l)}\mu^{(k)}_{-\alpha}}{a_{\alpha}-1}=r_{kl}+\frac{1}{2}(\mu_0^{(k)},\mu_0^{(l)})
\]
we conclude 
\begin{equation*}
D_k=\Bigl(\mu^{(k)}_0,x_0^{(k-1)}+\frac{1}{2}\sum_{l=k}^{n}\mu_0^{(l)}-\frac{1}{2}\sum_{l=1}^{k-1}\mu_0^{(l)}\Bigr)
-\sum_{l=1}^{k-1}r_{lk}+\sum_{l=k+1}^{n}r_{kl}.
\end{equation*}
Using $x_0^{(k-1)}=p-\mu_0^{(n)}-\cdots-\mu_0^{(k)}$ (see \eqref{relationxmu}) and the constraint \eqref{constraint}
we obtain (\ref{Dk-per}).
\end{proof}

A particularly simple expression has the quadratic Hamiltonian $H_2^{(n)}$ on $\cS(\cO)_{reg}$,
\[
H_2^{(n)}=\frac{1}{2} (p,p)+\sum_{\alpha}\frac{\mu_\alpha \mu_{-\alpha}}{(1-a_\alpha)(1-a_{\alpha}^{-1})}.
\]
Here $\mu_\alpha=\mu^{(1)}_\alpha+\dots +\mu^{(n)}_\alpha$.
Setting 
$q_\alpha=\alpha(\log(a))$
this formula becomes
a familiar formula for the spin Calogero-Moser Hamiltonian,
\begin{equation}\label{Hamper}
H_2^{(n)}=\frac{1}{2} (p,p)-\sum_{\alpha>0}\frac{\mu_\alpha \mu_{-\alpha}}{2sh^2(q_\alpha)}.
\end{equation}

Note that the periodic spin CM chain is the classical version of the dynamical Knizhnik-Zamolodchikov equation from \cite{ES}\cite{EV}.

\subsection{Periodic spin Calogero-Moser chain as a superintegrable system}\label{pCM-sup}

Now let us establish the superintegrability of the periodic spin CM chain. For this we should construct 
an intermediate Poisson manifold and projections as in \cite{N}\cite{R2}.

Observe that we have natural Poisson projections:
\begin{equation}\label{pr-3}
T^*(G^{\times n})/G_n \stackrel{p_1}{\rightarrow} \cP_n \stackrel{p_2}{\rightarrow} \cB_n
\end{equation}

Firstly, $\cP_n=({\g^*}^{\times n}\times_{{(\g^*/G)}^{\times n}}{\g^*}^{\times n})/G_n$ with
\begin{equation*}
{\g^*}^{\times n}\times_{{(\g^*/G)}^{\times n}}{\g^*}^{\times n}:=\{(x,y)\in {\g^*}^{\times n}\times {\g^*}^{\times n}| Gy_{i}=-Gx_{i-1}\},
\end{equation*}
where $Gz$ is the coadjoint orbit through $z\in\g^*$ and the indices $i$ are taken modulo $n$, and $G_n$ is acting by
\begin{equation}\label{P2}
g(x,y):=(Ad_{g_1}^*(x_1),\ldots,Ad_{g_n}^*(x_n),Ad_{g_1}^*(y_1),\ldots,Ad_{g_n}^*(x_n)).
\end{equation}
The map $p_1$ is the map induced from the $G_n$-equivariant map $\mu_L\times \mu_R^{tw}$. Explicitly, the mapping $p_1$ acts as
\begin{equation}\label{P1}
\begin{split}
p_1: G_n(x,g)&\mapsto G_n(\mu_L(x,g),\mu^{tw}_R(x,g))\\
&=G_n(x_1,x_2,\ldots,x_n, -Ad_{g_n^{-1}}^*(x_n),-Ad_{g_1^{-1}}(x_1),\ldots,-Ad_{g_{n-1}^{-1}}(x_{n-1})).
\end{split}
\end{equation}

Secondly,
\[
\cB_n={(\g^*/G)}^{\times n}
\]
and the map $p_2$ is the projection to the first factor. 



Restricting projection $p_1$ to the symplectic leaf $\cS(\cO)$ (see \eqref{SOoriginal}), we obtain the surjective Poisson projection
\[
p_{1,\cO}: \cS(\cO)\to \cP(\cO)
\]
where 
\[
\cP(\cO)=\{(x,y)\in {\g^*}^{\times n}\times_{(\g^*/G)^{\times n}}{\g^*}^{\times n}\,\,|\,\,x_i+y_i\in \cO_i\}/G_n\subset\cP_n
\]
with the $G_n$-action described by \eqref{P2}.

Restricting the second projection $p_2$ to $\cP(\cO)$ we have the Poisson projection 
\[
p_{2,\cO}: \cP(\cO)\to \cB(\cO)\subset \cB_n,\qquad
G_n(x,y)\mapsto (Gx_1, \dots, Gx_n)
\]
where $\cB(\cO)$ is the image of $p_{2,\cO}$. It can be explicitly described as
\[
\cB(\cO)=\{(\cO^{(1)}, \ldots, \cO^{(n)})\in (\g^*/G)^{\times n}\,\,|\,\, \cO_i\subseteq\cO^{(i)}-\cO^{(i-1)}\},
\]
with the indices $i$ taken modulo $n$.

\begin{lemma}
The dimension of $\cB(\cO)$ is $nr$ where $r$ is the rank of the Lie algebra $\g$.
\end{lemma}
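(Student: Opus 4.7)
The inequality $\dim \cB(\cO) \leq nr$ follows immediately from the inclusion $\cB(\cO) \subseteq \cB_n = (\g^*/G)^{\times n}$: by Chevalley's restriction theorem the categorical quotient $\g^*/G$ is isomorphic to $\af^*/W$ and has dimension $r = \textup{rank}(\g)$, so $\dim\cB_n = nr$.

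For the reverse inequality I would exhibit $nr$ functionally independent smooth functions on $\cS(\cO)_{reg}$ that factor through the projection $\pi := p_{2,\cO}\circ p_{1,\cO}$, which will force $\dim\cB(\cO) \geq nr$. The natural candidates are the pull-back Hamiltonians $H^{(i)}_{d_k}(x,g):=c_{d_k}(x^{(i)})$ for $1\leq i\leq n$ and $1\leq k\leq r$, where the $c_{d_k}$ are the Chevalley generators of the algebra $\RR[\g^*]^G$. Each such function depends on $x^{(i)}$ only through its $G$-orbit, hence factors through $\pi$, and any point at which the joint differential has rank $nr$ certifies the desired dimension bound on $\cB(\cO)$.

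To check the rank, I would use the parametrization (\ref{radred}) together with the formula (\ref{relationxmu}). On the Cartan components, that formula and the constraint (\ref{constraint}) give $x_0^{(i)} = p - \mu_0^{(n)} - \cdots - \mu_0^{(i+1)}$, so the map
\[
(p,\mu_0^{(2)},\ldots,\mu_0^{(n)}) \longmapsto (x_0^{(n)},x_0^{(n-1)},\ldots,x_0^{(1)})
\]
is a lower-triangular linear isomorphism $\af^{*n}\to \af^{*n}$. At a point where each $x^{(i)}$ is regular semisimple, the differentials $dc_{d_k}|_{x^{(i)}}\in \g_{x^{(i)}}$ form a basis of the centraliser Cartan, and, for a generic choice of $a\in A_{reg}$ and of the non-zero root coordinates $\mu^{(j)}_\alpha$, the orthogonal projection of each $\g_{x^{(i)}}$ onto $\af$ is an isomorphism. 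A direct block computation then expresses the Jacobian of the $nr$ functions $c_{d_k}(x^{(i)})$ with respect to the $nr$ variables $(p,\mu_0^{(2)},\ldots,\mu_0^{(n)})$ as a product of the triangular isomorphism above with a block-diagonal matrix whose $i$-th block encodes Chevalley independence on $\g_{x^{(i)}}$; both factors are invertible, so the Jacobian has full rank $nr$.

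The main obstacle is producing the specialisation point above when the orbits $\cO_j$ are not semisimple themselves (for instance, nilpotent), since then the Cartan projections $\mu_0^{(j)}$ cannot be varied freely on $\af^*$. This is where the non-triviality hypothesis on the $\cO_i$ enters: for every $\alpha\in R$ the linear map $M_\alpha$ relating $(\mu^{(j)}_\alpha)_j$ to $(x^{(i)}_\alpha)_i$ extracted from (\ref{relationxmu}) is invertible on $A_{reg}$, so non-zero variations of the root components of the $\mu^{(j)}$ translate into independent variations of the root components of each $x^{(i)}$; an elementary genericity argument then places every $x^{(i)}$ into the regular semisimple locus with centraliser in general position relative to $\af$, and the rank argument completes the proof.
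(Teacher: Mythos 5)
Your upper bound is fine, and the overall strategy---bounding $\dim\cB(\cO)$ from below by exhibiting $nr$ functionally independent pulled-back Casimirs $c_{d_k}(x^{(i)})$ on $\cS(\cO)_{reg}$---is a legitimate reformulation of what the paper does. The gap is in the rank computation. You take the Jacobian ``with respect to the $nr$ variables $(p,\mu_0^{(2)},\ldots,\mu_0^{(n)})$'', but the Cartan components $\mu_0^{(j)}$ are not free coordinates: each $\mu^{(j)}$ is constrained to the fixed orbit $\cO_j$, on which all Casimirs are constant, so any tangent direction to $\cO_j$ realizing a prescribed variation of $\mu_0^{(j)}$ necessarily drags the root components $\mu_\alpha^{(j)}$ along with it. By \eqref{relationxmu} those root components enter $x_\alpha^{(i)}$ for \emph{every} $i$, and the differentials $dc_{d_k}\vert_{x^{(i)}}$ do not annihilate the root spaces unless $x^{(i)}$ itself lies in $\af^*$. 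Hence every entry of your matrix acquires an extra contribution of the form $\sum_\alpha(\partial c_{d_k}/\partial x_\alpha^{(i)})(\cdots)$, and the claimed factorization into ``block-diagonal times triangular'' is false: the matrix is full, and its invertibility does not follow from the two factors you exhibit. Note that this coupling occurs for \emph{all} non-trivial orbits, not just the non-semisimple ones flagged in your last paragraph, and the ``elementary genericity argument'' offered there addresses a different issue (regular semisimplicity of the $x^{(i)}$ and the position of their centralisers), not this one.

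The paper's proof meets exactly this difficulty and resolves it by a scaling argument: it evaluates at a representative $x_1$ deep in the positive chamber of $\af^*$, where $dc_k\vert_{x_1}$ lies in $\af$ and therefore kills the root-space part of the perturbation $y_1\in\cO_1$ to first order, so that $c_k(x_1+y_1)=c_k(x_1)+\sum_i(\partial c_k/\partial h_i)(x_1)(y_1)_{0,i}+O(y_1^2)$ with the cross terms relegated to the subleading error; the nondegeneracy of the matrix $(\partial c_k/\partial h_i)$ at a regular point then yields an $r$-dimensional family of values of the Casimir vector for each factor in turn. Your argument can be repaired by importing the same device---send $p\to\infty$ in a regular direction, observe from \eqref{relationxmu} that the root components of each $x^{(i)}$ stay bounded while the Cartan components grow, and conclude that the offending off-block terms are dominated---but as written the key invertibility claim is unjustified.
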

\begin{proof}

Let $\hh^*_{\geq 0}$ be the positive Weyl chamber in the dual space to 
Cartan subalgebra of $\g$. For each generic orbit $\cO$ there is a unique 
representative $y\in \cO\cup \hh^*{>0}$. Let $x_1$ be such representative 
of $\cO^{(1)}$. Let us describe orbits $\cO^{(2)}$ such that $x_1+y_1\in \cO^{(2)}$
for some $y_1\in \cO_1$. Assume that $\cO^{(1)}$ is very large, i.e. $||x_1||>>1$.
Because $\cO_1$ is compact $||y_1||<C_1$ for some constant $C_1$ determined by the 
orbit $\cO_1$. Let $c^{(i)}_k$ be the value of $k$-th Casimir function on the orbit $\cO^{(i)}$.

For $k$-th Casimir function $c_k^{(2)}$ we have:
\[
c_k^{(2)}=c_k(x_1+y_1)=c_k^{(1)}+\sum_{i=1}^r \frac{\pa c_k(h)}{\pa h_i}|_{h=x_1}(y_1)_i+O(y^2)
\]
Because the matrix $\frac{\pa c_k(h)}{\pa h_i}$ is nondegerate for generic $h$, possible values
of the Euclidean vector with components $c_k^{(2)}$ span an $r$-dimensional 
neighborhood of $\{c_k^{(1)}\}$. 

Repeating this argument for each $\cO^{(i)}$ we conclude that each of $\cO_i$ is non-zero,
$\dim( \cB(\cO))=nr$. 

\end{proof}

Now let us describe the fiber $\cP(\cO; \cO^{(1)}, \dots, \cO^{(n)})$ of $p_{2,\cO}$ over $(\cO^{(1)}, \dots, \cO^{(n)})\in\cB(\cO)$:
\begin{equation*}
\begin{split}
\cP(\cO; \cO^{(1)},\dots, \cO^{(n)})&=\{(x,y)\in {\g^*}^{\times n}\times {\g^*}^{\times n}|x_i+y_i\in \cO_i, \ \ 
x_i\in \cO^{(i)}, y_i\in -\cO^{(i-1)} \}/G_n\\
&=\prod_{i=1}^n\bigl\{(x_i,y_i)\in\cO^{(i)}\times-\cO^{(i-1)}\,\, | \,\, x_i+y_i\in\cO_i\bigr\}/G
\end{split}
\end{equation*}
with the index $i$ taken modulo $n$ and with $G$ acting by the diagonal coadjoint action on $\cO^{(i)}\times-\cO^{(i-1)}$. 
Set 
\begin{equation}\label{cM3}
\cM(\cO^{(1)}, \cO^{(2)}, \cO^{(3)})=\{(x,y,z)\in \cO^{(1)}\times \cO^{(2)}\times \cO^{(3)}\,\,|\,\,x+y+z=0\}/G,
\end{equation}
with $G$ acting by the diagonal coadjoint action. Then 
 \begin{equation}\label{cMiso}
 \begin{split}
\bigl\{(x_i,y_i)\in\cO^{(i)}\times-\cO^{(i-1)}\,\, | \,\, x_i+y_i\in\cO_i\bigr\}/G&\overset{\sim}{\longrightarrow}\mathcal{M}(-\cO^{(i)},\cO^{(i-1)},\cO_i)\\
G(x_i,y_i)\mapsto G(-x_i,-y_i,x_i+y_i),
\end{split}
\end{equation}
and hence we conclude that
\begin{equation}\label{cpfactor}
\cP(\cO; \cO^{(1)},\dots, \cO^{(n)})\simeq\prod_{i=1}^n\mathcal{M}(-\cO^{(i)},\cO^{(i-1)},\cO_i),
\end{equation}
with the index $i$ taken modulo $n$. 
\begin{lemma}\label{dimrem}
Let $\cO^{(1)}, \cO^{(2)}$ be generic, sufficiently large coadjoint orbits and $\cO^{(3)}\neq 0$, then 
\[
\dim(\cM(\cO^{(1)}, \cO^{(2)}, \cO^{(3)}))=\dim(\cO^{(3)})-2r.
\]
\end{lemma}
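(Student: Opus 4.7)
My plan is to view $\cM(\cO^{(1)},\cO^{(2)},\cO^{(3)})$, defined by \eqref{cM3}, as the Marsden--Weinstein reduction at $0\in\g^*$ of the symplectic product $\cO^{(1)}\times\cO^{(2)}\times\cO^{(3)}$ equipped with its Kirillov--Kostant--Souriau product symplectic form. The diagonal coadjoint $G$-action is Hamiltonian with moment map $\mu(x,y,z)=x+y+z$, so by construction $\cM=\mu^{-1}(0)/G$. Provided $0$ is a regular value of $\mu$ on a dense open subset of $\mu^{-1}(0)$ and $G$ acts there with discrete stabilizers, the standard dimension count for a symplectic quotient yields
\[
\dim\cM(\cO^{(1)},\cO^{(2)},\cO^{(3)})=\dim\cO^{(1)}+\dim\cO^{(2)}+\dim\cO^{(3)}-2\dim\g.
\]
Since for generic, sufficiently large $\cO^{(1)}$ and $\cO^{(2)}$ both orbits are regular, $\dim\cO^{(i)}=\dim\g-r$ for $i=1,2$, and the right-hand side collapses to $\dim\cO^{(3)}-2r$, which is the claim.

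To make this rigorous, two things must be checked. First, $\mu^{-1}(0)$ has to be nonempty: there must exist $x,y,z$ in the three orbits with $x+y+z=0$. This is a Kostant-convexity/Horn-type statement for sums of coadjoint orbits, and is exactly the role played by the hypothesis that $\cO^{(1)}$ and $\cO^{(2)}$ are sufficiently large: the image of the partial sum map $\cO^{(1)}\times\cO^{(2)}\to\g^*$ will then contain an open neighbourhood of $-\cO^{(3)}$ regardless of which nonzero $\cO^{(3)}$ is chosen. Second, the diagonal $G$-action has to be locally free at a generic point of $\mu^{-1}(0)$. By the standard moment-map identity identifying the cokernel of $d\mu_{(x,y,z)}$ with the dual of the stabilizer $\g_{(x,y,z)}=\g_x\cap\g_y\cap\g_z$, local freeness at $(x,y,z)$ is equivalent to $0$ being a regular value of $\mu$ there, and both reduce to the single condition $\g_x\cap\g_y=0$.

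The main obstacle is verifying this last genericity statement. For regular $x$ and $y$ both $\g_x$ and $\g_y$ are Cartan subalgebras of $\g$ of dimension $r$. Working over the complexification where all Cartan subalgebras are conjugate, I can fix $\g_x=\hh$ and write $\g_y=\Ad_g(\hh)$ with $g$ varying in $G_\CC/H_\CC$; the locus where $\hh\cap\Ad_g(\hh)\neq 0$ is then a proper closed subvariety of $G_\CC/H_\CC$, as one sees from the root space decomposition (a nonzero element of $\hh$ fixed by $\Ad_g$ imposes a nontrivial polynomial condition on $g$). Pulling this back to $\cO^{(1)}\times\cO^{(2)}$ via the orbit parametrisations shows that the bad locus has positive codimension there, so generically $\g_x\cap\g_y=0$.

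With both inputs in hand, $\mu$ is a submersion on the complement of a proper closed subset of $\mu^{-1}(0)$, making that complement a smooth submanifold of codimension $\dim\g$ on which $G$ acts with discrete stabilizers. The Marsden--Weinstein count then gives $\dim\cM=\dim\cO^{(3)}-2r$ on this big stratum, which is the regime where the lemma is formulated and which matches the paper's convention (see Section~2) of measuring dimensions on the big stratum of a stratified symplectic quotient.
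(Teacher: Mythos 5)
Your proof is correct, but it reaches the dimension formula by a different route than the paper. The paper gauge-fixes: it places the representative $x\in\cO^{(1)}$ in the positive Weyl chamber, which cuts the diagonal $G$-action down to the residual stabilizer $H$, eliminates $z=-x-y$, and then counts the $r$ Casimir constraints needed to force the remaining free variable into its prescribed orbit before quotienting by $H$; the answer $\dim(\cO^{(3)})-2r$ drops out of this explicit bookkeeping. You instead invoke the general Marsden--Weinstein count $\dim\cM=\sum_i\dim\cO^{(i)}-2\dim\g$ for reduction at $0$, and then substitute $\dim\cO^{(1)}=\dim\cO^{(2)}=\dim\g-r$ for generic (regular) orbits. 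Your version has two advantages: it makes transparent why $\cM(\cO^{(1)},\cO^{(2)},\cO^{(3)})$ carries a symplectic structure (a fact the paper records separately, just after the corollary), and it cleanly isolates where the genericity of $\cO^{(3)}$ is \emph{not} needed --- the asymmetry between $\cO^{(3)}$ and the other two orbits enters only through their dimensions, never through a membership-versus-Casimir-values issue. The paper's version is more elementary and matches the gauge-fixing language used throughout Section \ref{ph-space-reg}. The one step in your argument that deserves the same scrutiny as the paper's phrase ``these constraints are independent'' is the claim that the locus $\{\g_x\cap\g_y\neq 0\}$, while proper and closed in $\cO^{(1)}\times\cO^{(2)}$, does not contain the whole of $\mu^{-1}(0)$; for sufficiently large orbits this can be seen by fixing a regular $x$ in the Weyl chamber and noting that the fiber of $y\mapsto -x-y$ over $\cO^{(3)}$ moves $y$ through an open set of directions transverse to the bad locus, but as stated it is asserted rather than proved --- at the same level of rigor as the paper's own genericity claim, so I would not count it as a gap.
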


\begin{proof} 
Let $x\in \cO^{(1)}$ be the unique representative which lies in the positive Weyl chamber.
Assume this orbit is "big", i.e. $||x||>>1$. The condition $x+y+z=0$ for $y\in \cO^{(2)}$ and 
$z\in \cO^{(3)}$ for a large orbit $\cO^{(2)}$ means that we have $r$ constraints 
$c_k(-x-y)=c_k^{(3)}$ on $y$. For large orbits $\cO^{(1)}$ and $\cO^{(2)}$ these 
constraints are independent. Taking into account that we are quotiening by $H$ we have
$\dim(\cM(\cO^{(1)},\cO^{(2)},\cO^{(3)})=\dim(\cO^{(3)}-2r$.
\end{proof}

\begin{corollary}
Thus the dimension of the fiber is $\dim(\cP(\cO; \cO^{(1)}, \dots, \cO^{(n)}))=\sum_{i=1}^n\dim(\cO_i)-2nr$.
\end{corollary}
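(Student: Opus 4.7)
The plan is to deduce this corollary directly from the product decomposition \eqref{cpfactor} combined with the preceding Lemma \ref{dimrem}. Since dimension is additive over products, I would first write
\[
\dim\cP(\cO;\cO^{(1)},\dots,\cO^{(n)})=\sum_{i=1}^n\dim\cM(-\cO^{(i)},\cO^{(i-1)},\cO_i),
\]
with indices modulo $n$.

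Next I would apply Lemma \ref{dimrem} to each of the $n$ factors. Under the standing assumption that the $\cO^{(i)}$ are sufficiently generic and large and that each $\cO_i\neq\{0\}$ (which is imposed throughout section \ref{g-act-per}), the lemma gives
\[
\dim\cM(-\cO^{(i)},\cO^{(i-1)},\cO_i)=\dim\cO_i-2r
\]
for every $i=1,\dots,n$, where the ``third'' orbit (in the role played by $\cO^{(3)}$ in the lemma) is $\cO_i$ because this is the nontrivial but not necessarily large one. The relevant hypothesis is that the two ``large'' slots $-\cO^{(i)}$ and $\cO^{(i-1)}$ are generic and big, which is exactly the situation describing a generic fiber of $p_{2,\cO}$.

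Summing over $i$ yields
\[
\dim\cP(\cO;\cO^{(1)},\dots,\cO^{(n)})=\sum_{i=1}^n\bigl(\dim\cO_i-2r\bigr)=\sum_{i=1}^n\dim\cO_i-2nr,
\]
which is the claimed formula.

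There is essentially no obstacle beyond bookkeeping. The only subtlety worth flagging explicitly is that Lemma \ref{dimrem} applies generically (for sufficiently large $\cO^{(i-1)}$ and $-\cO^{(i)}$), so the corollary describes the dimension of a generic fiber of $p_{2,\cO}$; this matches the convention announced in subsection 2 of the introduction that ``superintegrability means the balance of dimensions for the big stratum.''
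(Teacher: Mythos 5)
Your argument is exactly the paper's (implicit) one: the corollary follows by applying the product decomposition \eqref{cpfactor} and then Lemma \ref{dimrem} to each of the $n$ factors, with $\cO_i$ in the role of the third orbit, and summing. Your remark about genericity of the $\cO^{(i)}$ matches the paper's standing conventions, so the proof is correct and complete.
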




Each of the factors in \eqref{cpfactor} is the Hamiltonian reduction of the product of the three 
coadjoint orbits relative to the moment map of the diagonal coadjoint $G$-action, and therefore carries a natural symplectic structure. Moduli spaces $\cM(\cO^{(1)}, \cO^{(2)}, \cO^{(3)})$
and therefore fibers of $p_{2,\cO}$ are stratified symplectic spaces.

\begin{theorem}\label{siperiodic} The Hamiltonian system generated by any Hamiltonian for the periodic spin CM chain described in section \ref{Ham-per} is superintegrable with the superintegrable structure described by the Poisson maps
\[
\cS(\cO) \stackrel{p_{1,\cO}}{\longrightarrow} \cP(\cO) \stackrel{p_{2,\cO}}{\longrightarrow} \cB(\cO)
\]
as introduced earlier in this section.
\end{theorem}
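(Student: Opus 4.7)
The plan is to verify the three properties defining a superintegrable structure on $\bigl(\cS(\cO), \cP(\cO), \cB(\cO)\bigr)$ as stated in section 2 of the introduction: namely, that $p_{1,\cO}$ and $p_{2,\cO}$ are surjective Poisson maps, that $\cB(\cO)$ has trivial Poisson structure, and that the dimension balance
\[
\dim\cS(\cO)=\dim\cP(\cO)+\dim\cB(\cO)
\]
holds on the big stratum. In view of the adopted convention (superintegrability means dimension balance on the big stratum), this is all that needs to be checked.

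First, I would establish the Poisson content. The map $\mu_L\times \mu_R^{tw}:T^*(G^{\times n})\to \g^{*\times n}\times \g^{*\times n}$ is the combined moment map for the Hamiltonian action of $G_n\times G_n$ by left translations and twisted right translations; as such, it is $G_n$-equivariant for the diagonal (gauge) action and Poisson with respect to the Lie--Poisson structure on the target. It therefore descends to a Poisson map on the GIT quotient $T^*(G^{\times n})/G_n\to\cP_n$ whose restriction to the symplectic leaf $\cS(\cO)$ is $p_{1,\cO}$, and surjectivity onto $\cP(\cO)$ is immediate from the definition of the latter as the image. The projection $p_2:\cP_n\to (\g^*/G)^n=\cB_n$ is Poisson because the coadjoint invariants on $\g^*$ span the Poisson center (the Casimirs), so the coadjoint quotient $\g^*/G$ carries the trivial Poisson structure; this simultaneously verifies that $\cB(\cO)\subset\cB_n$ has trivial Poisson structure and that its restriction to $\cP(\cO)$ gives the surjective Poisson map $p_{2,\cO}$ onto $\cB(\cO)$.

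Second, I would assemble the dimension count on the big stratum from results proved earlier in this section. The isomorphism \eqref{radred} gives $\dim\cS(\cO)=\bigl(\sum_i\dim\cO_i-2r\bigr)+2r=\sum_i\dim\cO_i$. The first lemma of this section gives $\dim\cB(\cO)=nr$. The factorisation \eqref{cpfactor} combined with Lemma \ref{dimrem} gives that a generic fiber of $p_{2,\cO}$ has dimension $\sum_i\dim\cO_i-2nr$, whence $\dim\cP(\cO)=nr+\sum_i\dim\cO_i-2nr=\sum_i\dim\cO_i-nr$. Adding these, $\dim\cP(\cO)+\dim\cB(\cO)=\sum_i\dim\cO_i=\dim\cS(\cO)$, which is the required balance.

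The main obstacle is not in any single step but in handling the stratified nature of the spaces: $\cS(\cO)$, $\cP(\cO)$ and their fibers are stratified symplectic or Poisson spaces obtained by GIT quotients, and the dimension balance is only meaningful on the big stratum. I would restrict throughout to that stratum, where the explicit model $\cS(\cO)_{reg}\simeq \bigl(\nu_\cO^{-1}(0)/H\times T^*A_{reg}\bigr)/W$ applies and the dimension formulas in the preceding lemmas are valid. A secondary point worth making precise is that the Hamiltonians of the periodic spin CM chain, defined as restrictions to $\cS(\cO)$ of the $G_n$-invariant functions on $\g^{*\times n}$ pulled back from $(\g^*/G)^{\times n}$, are exactly the elements of $(p_{2,\cO}\circ p_{1,\cO})^*C^\infty(\cB(\cO))$; this matches the construction in section \ref{Ham-per} and places the Hamiltonian dynamics squarely inside the abstract framework of section 2.
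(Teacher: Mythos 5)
Your proposal is correct and follows essentially the same route as the paper: both reduce the theorem to the dimension balance $\dim\cS(\cO)=\dim\cP(\cO)+\dim\cB(\cO)$ on the big stratum, assembled from the isomorphism \eqref{radred}, the lemma giving $\dim\cB(\cO)=nr$, and the fiber dimension $\sum_i\dim\cO_i-2nr$ coming from \eqref{cpfactor} and Lemma \ref{dimrem}. The only item the paper includes that you omit is the explicit description of the fibers of $p_{1,\cO}$ as $\{G_n(x,g)\mid g_i\in\widetilde g_i Z_G(y_{i+1})\}$, generically isotropic of dimension $nr$; since isotropy of $p_1$-fibers is stated in the introduction as a consequence of the dimension balance rather than part of the definition, this does not constitute a gap in your argument.
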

Here, as everywhere above, we assume that $\cO_i\neq \{0\}$ for each $i=1,\dots, n$.
\begin{proof} 
For $G_n(x,y)\in\cP(\cO)$
let $\widetilde{g}_i\in G$ such that $y_{i+1}=-Ad_{\widetilde{g}_i^{\,-1}}^*(x_i)$. Then 
\[
p_{1,\cO}^{-1}\bigl(G_n(x,y)\bigr)=\{G_n(x,g)\in\cS(\cO) \,\, | \,\, g_i\in\widetilde{g}_iZ_G(y_{i+1})\}
\]
(index $i$ taken modulo $n$)
which, generically, is isotropic and of dimension $nr=\dim(\cB(\cO))$. 

It remains to check the balance of dimensions. It follows from \eqref{radred} that 
\[
\dim(\cS(\cO))=\sum_{i=1}^n\dim(\cO_i).
\]
By Remark \ref{dimrem} we have, for generic $(\cO^{(1)},\ldots,\cO^{(n)})\in(\g^*/G)^{\times n}$,
\begin{equation*}
\begin{split}
\dim(\cP(\cO))&=\dim(\cB(\cO))+\dim(\cP(\cO;\cO^{(1)},\ldots,\cO^{(n)}))\\
&=\dim(\cB(\cO))+\sum_{i=1}^n\dim(\cO_i)-2nr.
\end{split}
\end{equation*}
Then
\[
\dim(\cP(\cO))+\dim(\cB(\cO))=\sum_{i=1}^n\dim(\cO_i)=\dim(\cS(\cO)),
\]
as desired.
\end{proof}

\begin{remark}\label{cyclicquantum} In the compact case, the quantum version of functions on $\cM(\cO^{(1)},\cO^{(2)},\cO^{(3)})$ is the algebra of endomorphisms $End((V_{\lambda_1}\otimes  V_{\lambda_2}\otimes V_{\lambda_3})^G)$ of the
subspace of $G$-invariant vectors in the tensor product $V_{\lambda_1}\otimes  V_{\lambda_2}\otimes V_{\lambda_3}$ with $V_{\lambda_i}$ the representation corresponding to $\cO^{(i)}$. 

The quantum version of the algebra of functions on the fiber $\cP(\cO; \cO^{(1)}, \dots, \cO^{(n)})$
is the algebra of endomorphisms of the vector space
\[
\textup{Hom}_G(V_{\lambda_1},V_{\lambda_n}\otimes V_{1})\otimes \textup{Hom}_G(V_{\lambda_2},V_{\lambda_1}\otimes V_{2})\otimes\cdots\otimes
\textup{Hom}_G(V_{\lambda_n},V_{\lambda_{n-1}}\otimes V_{n}).
\]
Here the  
orbits $\cO_i$ correspond to $V_i$, and $\textup{Hom}_G(V_{\lambda_i},V_{\lambda_{i-1}}\otimes V_i)$ is the space of $G$-linear intertwiners
$V_{\lambda_i}\rightarrow V_{\lambda_{i-1}}\otimes V_i$.
For details see \cite{RSQCD1}.
\end{remark}

\subsection{Constructing solutions by the projection method and angle variables}\label{solutions}

For $\mathcal{H}$ a $G$-invariant function on $\g^*$, write $\mathcal{H}^{(i)}$ for the $G_n$-invariant function on $T^*(G^{\times n})$ defined
by $\mathcal{H}^{(i)}(x,g):=\mathcal{H}(x_i)$.
The Hamiltonian flow through $(x,g)\in \g^{*\times n}\times G^{\times n}$ generated by $\mathcal{H}^{(i)}$ is
\begin{equation}\label{evi-cl}
(x(t_i), g(t_i))=(x_1, \dots, x_n, g_1, \dots,g_{i-1}, e^{\nabla\mathcal{H}(x_i) t_i}g_i,g_{i+1} \dots, g_n)
\end{equation}
where $\nabla\mathcal{H}(x)\in \g$ is the gradient of $\mathcal{H}$ at $x\in\g^*$, i.e.,
\[
y(\nabla\mathcal{H}(x))=\frac{d}{dt}\mathcal{H}(x+ty)\vert_{t=0}
\]
for all $y\in\mathfrak{g}^*$.
The projection of such flow line to $T^*(G^{\times n})/G^{\times n}$ and further restricted to 
$\cS(\cO)\subset T^*(G^{\times n})/G_n$ is a flow line of the Hamiltonian vector field generated by 
the restriction of $\mathcal{H}^{(i)}$ to $\cS(\cO)$.

Now let us construct angle variables, i.e., functions on $S(\cO)$ which evolve linearly with respect to
the evolution (\ref{evi-cl}) for each $i=1,\ldots,n$. Write $\mathfrak{a}_+^*\subset\g^*$ for the elements $x\in\g^*$ which vanish on root spaces and satisfy $(x,\alpha)>0$
for $\alpha\in R_+$, where $(\cdot,\cdot)$ is the bilinear form on $\g^*$ induced by the Killing form. Write $\g^{\prime*}$ for the elements in $\g^*$ which are $G$-conjugate to an element in $\mathfrak{a}_+^*$, relative to the coadjoint action.


For $(x,g)\in \g^{\prime*\times n}\times G^{\times n}$ 
define $s_i\in G$ 
by the property $Ad^*_{s_i}(x_i)\in\mathfrak{a}_+^*$. These elements are defined only up to $s_i\mapsto a_is_i$ where $a_i\in H$. Gauge transformations $h\in G_n$ act
by $(s_1,\ldots,s_n)\mapsto (s_1h_1^{-1},\ldots,s_nh_n^{-1})$.

Let $G_{\mathbb{C}}$ be a complexification of $G$, which we take to be connected.
Let $H_{\mathbb{C}}\subset G_{\mathbb{C}}$ be the Cartan subgroup $Z_{G_{\mathbb{C}}}(\mathfrak{h})$, where $\mathfrak{h}$ is the Cartan subalgebra
$\mathfrak{a}\oplus i\mathfrak{a}$ of the Lie algebra $\g_{\mathbb{C}}=\mathfrak{g}\oplus i\mathfrak{g}$ of $G_{\mathbb{C}}$. Then $A\subseteq H\subset H_{\mathbb{C}}$.
We identify $\mathfrak{g}^*$ with the real subspace of $\mathfrak{g}_{\mathbb{C}}^*$ consisting of the complex linear functionals that take real values on $\g$.
For finite dimensional $G_{\mathbb{C}}$-representations $V_1,\ldots,V_n$ choose vector $v_i\in V_i$ of $H_{\mathbb{C}}$-weight $\lambda_{i+1}$ and linear functionals $u_i^*\in V_i^*$
of $H_{\mathbb{C}}$-weight $-\lambda_i$
(indices $i$ taken modulo $n$).\footnote{Recall that $G$ acts on dual vectors as $(gu^*)(v)=u^*(g^{-1}v)$.} Define
\begin{equation}\label{f-var}
f_{u,v}(x,g)=u_1^*(s_1g_1s_2^{-1}v_1) u_2^*(s_2g_2s_3^{-1}v_2) \dots u_n^*(s_ng_ns_1^{-1}v_n) 
\end{equation}
for $(x,g)\in (\g^{\prime*}\times G)^{\times n}$. This expression is well defined (i.e., invariant with respect to transformations
$s_i\to a_is_i$ with $a_i\in H$), and invariant with respect to gauge transformations. Thus, it defines a function on the subset
$(\g^{\prime*\times n}\times G^{\times n})/G_n$ of $T^*(G^{\times n})/G_n$. 

From the $G$-invariance of $\mathcal{H}$ we have the identity
\[
u^*_i(s_ie^{t_i\nabla\mathcal{H}(x_i)}g_is_{i+1}^{-1}v_i)=e^{t_i\lambda_i(\nabla\mathcal{H}(y_i))}u^*_i(s_ig_is_{i+1}^{-1}v_i)
\]
where $y_i=Ad_{s_i}^*(x_i)\in\mathfrak{a}_+^*$, and consequently
\begin{equation}\label{towaa}
f_{u,v}(x(t_i),g(t_i))=e^{t_i\lambda_i(\nabla\mathcal{H}(y_i))}f_{u,v}(x,g).
\end{equation}
Logarithms of these functions evolve linearly, and hence they produce angle variables for the Hamiltonians $\mathcal{H}^{(i)}$ on $S(\cO)\cap (\g^{\prime*\times n}\times G^{\times n})/G_n$.


\section{Open spin Calogero-Moser chains}\label{open-CM}

Recall from the introduction that $G$ is a split real connected Lie group with finite center which admits a complexification,
and $K\subset G$ is the subgroup of fixed points of a fixed Cartan involution $\Theta$ of $G$. Recall furthermore the root space decomposition 
$\g=\af\oplus \bigoplus_{\alpha>0}(\RR e_\alpha\oplus \RR e_{-\alpha})$ with the Cartan subalgebra $\af\subset\g$ and the root vectors $e_\alpha\in\g_\alpha$ 
such that the infinitesimal Cartan involution $\theta$ acts as
\[
\theta(h)=-h, \ \ \theta(e_\alpha)=-e_{-\alpha}
\]
for $h\in\af$ and $\alpha\in R$. We will furthermore normalise the root vectors in such a way that $(e_\alpha,e_{-\alpha})=1$, with $(\cdot,\cdot)$ the Killing form of $\g$.

To avoid cumbersome notations, we will not always indicate in notations that we are in the open case. This leads to an overlap of some of the notations with the ones for the periodic case. For instance, the moment maps, Poisson spaces and Poisson projections will be denoted in the same way. 

\subsection{The phase space as the Hamiltonian reduction} \label{ph-space-open}
Consider for $n\geq 0$ the manifold $T^*(G^{\times n+1})$ with the standard symplectic structure.
We trivialize the cotangent bundle $T^*(G^{\times n+1})$ by right translations:
\begin{equation}\label{trivialisationK}
T^*(G^{\times n+1})\simeq (T^*G)^{\times n+1}\simeq {\g^*}^{\times n+1}\times G^{\times n+1}
\end{equation}

We have a natural action of $K\times G^{\times n}$ on $G^{\times n+1}$ by left translations:
\[
(k_\ell, h_1, \dots, h_n)_L(g_0, g_1, \dots, g_n)=(k_\ell g_0, h_1g_1,\dots, h_ng_n)
\]
This action lifts to the following Hamiltonian action on $T^*G^{\times n+1}$,
\begin{equation*}
\begin{split}
(k_\ell, h_1, \dots, h_n)_L(x_0, &\dots, x_n,g_0, g_1, \dots, g_n)=\\
&= (Ad^*_{k_\ell}(x_0), Ad^*_{h_1}(x_1), \dots, Ad^*_{h_n}(x_n),k_\ell g_0, h_1g_1,\dots, h_ng_n)
\end{split}
\end{equation*}
with the moment map
\[
\mu_L(x,g)=(\pi(x_0), x_1, \dots, x_n)
\]
where the projection $\pi: \g^*\rightarrow \kf^*$  is the dual map dual to the embedding $\kf\hookrightarrow \g$.

Similarly, the action of $G^{\times n}\times K$ on $G^{\times n+1}$ by right translations
\[
(h_1,\dots, h_n, k_r)_R(g_0,g_1,\dots, g_n)=(g_0h_1^{-1}, g_1h_2^{-1},\dots, g_{n-1}h_n^{-1}, g_nk_r^{-1})
\]
lifts to the following Hamiltonian action on $T^*G^{\times n+1}$,
\begin{equation*}
\begin{split}
(h_1,\dots, h_n, k_r)_R(x_0,& \dots, x_n, g_0,g_1,\dots, g_n)\\
&=(x_0, x_1,\dots, x_n,g_0h_1^{-1}, g_1h_2^{-1},\dots, g_{n-1}h_n^{-1}, g_nk_r^{-1}),
\end{split}
\end{equation*}
with the moment map
\[
\mu_R(x,g)=(-Ad^*_{g_0^{-1}}(x_0), -Ad^*_{g_1^{-1}}(x_1),\dots, -Ad^*_{g_{n-1}^{-1}}(x_{n-1}),-\pi(Ad^*_{g_n^{-1}}(x_n))).
\]

As a result, the group $G_{n,K}:=K\times G^{\times n}\times K$
acts on $T^*(G^{\times n+1})$ as 
\begin{equation}\label{k-n-act}
\begin{split}
(k_\ell,h_1,\dots,& h_{n}, k_r)(x_0,\dots, x_n, g_0,\dots, g_n)=\\ 
=&(Ad^*_{k_\ell}(x_0), Ad^*_{h_1}(x_1),\dots, Ad^*_{h_{n}}(x_n), k_\ell g_0h_1^{-1}, h_1g_1h_2^{-1}, \dots, h_{n}g_nk_r^{-1})
\end{split}
\end{equation}
with $k_\ell,k_r\in K$ and $h_i\in G$. This action is Hamiltonian with the moment map $\mu: T^*(G^{\times n})\to \kf^* \times {\g^*}^{\times n}\times \kf^*$ given by
\begin{equation}\label{momentK}
\begin{split}
\mu((&x_0,\dots, x_n,g_0,\dots, g_n)=(\mu_L(x,g),0)+ (0,\mu_R(x,g))= \\ 
&=(\pi(x_0), x_1-Ad^*_{g_0^{-1}}(x_0), x_2-Ad^*_{g_1^{-1}}(x_1),\dots, x_{n}-Ad^*_{g_{n-1}^{-1}}(x_{n-1}), -\pi(Ad^*_{g_n^{-1}}(x_n))).
\end{split}
\end{equation}
For $n=0$ this is the $K\times K$ action $(k_\ell, k_r)(x,g)=(Ad^*_{k_\ell}(x), k_\ell g k_r^{-1})$ on $T^*G$,
with the moment map $(x,g)\mapsto (\pi(x), -\pi(Ad_{g^{-1}}(x)))$.
It is easy to check explicitly that this moment map intertwines
the action of $G_{n,K}$ on $T^*(G^{\times n+1})$ given by (\ref{k-n-act}) with its diagonal coadjoint action on 
$\kf^* \times {\g^*}^{\times (n-1)}\times \kf^*$.

Because the action of $G_{n,K}$ on $T^*(G^{\times n+1})$ is Hamiltonian, the space 
$T^*(G^{\times n+1})/G_{n,K}$\footnote{Recall that here and in everywhere else in this paper $X/H$ means the
GIT quotient for a Lie group $H$ action on a manifold $X$.} is Poisson with symplectic leaves being given by the Hamiltonian reduction with respect to the moment
map \eqref{momentK}. Let $\cO=(\cO_\ell^K\times  \cO_1 \times \dots, \cO_{n} \times \cO^K_r)$ with $\cO_i\subset \g^*$ coadjoint $G$-orbits and 
$\cO_\ell^K, \cO_r^K \subset \kf^*$ coadjoint $K$-orbits, then the corresponding symplectic leaf in $T^*(G^{\times n+1})/G_{n,K}$ is
\begin{equation}\label{bcase}
\begin{split}
\cS&(\cO)=\mu^{-1}(\cO)/G_{n,K}\\
&=\bigl\{(x_0,\dots, x_n, g_0,\dots, g_n) \in {\g^*}^{\times n+1}  \times G^{\times n+1}\,\, |\,\, \pi(x_0)\in \cO^K_\ell, \, -\pi(Ad_{g_n^{-1}}^*(x_n))\in \cO^K_r,\\
&\qquad\qquad\qquad\qquad\qquad\quad x_1-Ad^*_{g_0^{-1}}(x_0) \in \cO_1,\dots,  x_{n}-Ad^*_{g_{n-1}^{-1}}(x_{n-1})\in \cO_n\bigr\}/G_{n, K}.
\end{split}
\end{equation}
Each symplectic leaf $\S(\cO)$ is a stratified symplectic space and is the phase space for the corresponding open spin Calogero-Moser chain. We will describe the largest stratum of $\cS(\cO)$ later.

\subsection{The Hamiltonians of the open spin Calogero-Moser chain}\label{oCM-Ham}
After the trivialization \eqref{trivialisationK} of $T^*(G^{\times n+1})$ by right translations we have a natural Poisson projection $ T^*(G^{\times n+1})\to {\g^*}^{\times n+1}$ to the first factor. It is $G_{n, K}$-invariant with the following action of $G_{n,K}$ on ${\g^*}^{\times n+1}$
\[
(k_\ell, h_1,\dots, h_{n}, k_r): (x_0, x_1,\dots, x_n)\mapsto (Ad^*_{k_\ell}(x_0), Ad^*_{h_1}(x_1),\dots, Ad^*_{h_{n}}(x_n)).
\]
This gives rise to the projection 
\[
p: T^*(G^{\times n+1})/G_{n,K}\to (\g^*/G)^{\times n+1}
\]
which is Poisson because it is the composition of natural Poisson projections 
\begin{equation}\label{pi-project}
 T^*(G^{\times n+1})/G_{n,K}\to ({\g^*}^{\times n+1})/G_{n,K}=\g^*/K\times (\g^*/G)^{\times n} \to (\g^*/G)^{\times n+1}.
\end{equation}
Here the Poisson structure on the right is  trivial (the Poisson tensor is zero). The last projection is a consequence of the embedding $K\hookrightarrow G$.

Restricting this projection to the symplectic leaf $\cS(\cO)$ we have the Poisson projection
\begin{equation}\label{p}
p_\cO: \cS(\cO)\to \cB(\cO), \quad G_{n,K}(x_0,\dots,x_n, g_0,\dots, g_n)\mapsto (Gx_0,\dots, Gx_n)
\end{equation}
where $\cB(\cO)\subset (\g^*/G)^{\times n+1}$ is, by definition, the image of $\cS(\cO)$.
It can be described explicitly from the description of $\cS(\cO)$ as
\begin{equation}\label{cBO}
\begin{split}
\cB(\cO)=\bigl\{(\cO^{(0)},\dots, \cO^{(n)})\in (\g^*/G)^{\times n+1}&\,\, |\,\, \cO_\ell^K\subseteq \pi(\cO^{(0)}),\cO_r^K\subseteq -\pi(\cO^{(n)}),\\ 
&\cO_1\subseteq\cO^{(1)}-\cO^{(0)},\ldots,\cO_n\subseteq\cO^{(n)}-\cO^{(n-1)}\bigr\}.
\end{split}
\end{equation}

Hamiltonians of the open spin Calogero-Moser system are pull back $p^*$ of functions on ${(\g^*/G)}^{\times n+1}$ restricted to $\cS(\cO)$. 

The subalgebra of Hamiltonians is a Poisson commuting subalgebra. Quadratic Hamiltonians are given by Casimir functions. We will compute the radial components of the quadratic Casimirs explicitly in section \ref{qu-Ham}. Hamiltonians are constant on fibers of the projection $p_{\cO}$. 

\subsection{The gauge fixing}\label{op-g-reduct}
Fix $i\in\{0,\ldots,n\}$. For $g=(g_0, \dots, g_n)\in G^{\times n+1}$ and $k_\ell,k_r\in K$ define $h\in G^{\times n}$ (depending on $i,g,k_\ell,k_r$) by
\begin{equation*}
h_j=
\begin{cases}
k_\ell g_0g_1\cdots g_{j-1}\qquad &\hbox{ if }\,\, j\leq i,\\
k_r g_n^{-1}g_{n-1}^{-1}\cdots g_{j}^{-1}\qquad &\hbox{ if }\,\, j>i.
\end{cases}
\end{equation*}
Then $(k_\ell, h_1, \dots, h_n, k_r)$ acts on $g\in G^{\times n+1}$ as
\[
g\mapsto (1,\dots, 1, k_\ell g_0g_1\dots g_n k_r^{-1}, 1,\dots, 1),
\]
Here the nontrivial entry is at the position $i$.
This gives an $i$-independent isomorphism
\begin{equation}\label{g-isom}
G^{\times n+1}/G_{n,K}\to K\backslash G/K,  \qquad G_{n,K}(g_0,\dots, g_n)\mapsto Kg_0g_1\dots g_nK.
\end{equation}
For the cotangent bundles the gauge fixing gives an isomorphism
\begin{equation}\label{cot-map}
\phi_i: \bigl(\g^{*\times n+1}\times G^{\times n+1}\bigr)/G_{n,K}\overset{\sim}{\rightarrow} K\backslash ( {\g^*}^{\times {n+1}}\times G)/K
\end{equation}
mapping the $G_{n,K}$-orbit through $(x,g)\in \g^{*\times n+1}\times G^{\times n+1}$ to the double $K$-coset through
\[
\bigl(x_0,Ad_{g_0}^*(x_1),\ldots,Ad_{g_0\cdots g_{i-1}}^*(x_i),Ad_{g_n^{-1}\cdots g_{i+1}^{-1}}^*(x_{i+1}),\ldots,Ad_{g_n^{-1}}^*(x_n),g_0g_1\cdots g_n\bigr)
\]
For example for $i=n$ this expression is $\bigl(x_0,Ad_{g_0}^*(x_1),\ldots,Ad_{g_0\cdots g_{n-1}}^*(x_n),g_0g_1\cdots g_n\bigr)$. In (\ref{cot-map}) the double $K$-cosets in the codomain of 
$\phi_i$ are taken relative to the 
$i$-dependent $K\times K$-action 
\[
(k_\ell,k_r)(x_0,\ldots,x_n,g)=(Ad_{k_\ell}^*(x_0),\cdots,Ad_{k_\ell}^*(x_i),Ad_{k_r}^*(x_{i+1}),\ldots,Ad_{k_r}^*(x_n),k_\ell g k_r^{-1})
\]
on ${\g^*}^{\times n+1}\times G$. 

Now we can describe the symplectic leaf $\cS(\cO)$ as a subvariety in $K\backslash ( {\g^*}^{\times {n+1}}\times G)/K$ though the isomorphism $\varphi_n$ as
\begin{equation}\label{ps}
\begin{split}
\cS(\cO)=K\backslash\bigl\{(y_0,y_1,\ldots,y_n,g)\in\g^{*\times n+1}\times G& \,\, | \,\, \pi(y_0)\in\cO_\ell^K, -\pi\bigl(Ad_{g^{-1}}^*(y_n))\in\cO_r^K,\\
&\quad\,\,\, y_1-y_0\in\cO_1,\ldots,y_n-y_{n-1}\in\cO_n\bigr\}/K.
\end{split}
\end{equation}
Note that, as in the periodic case, $\cS(\cO)$ is a symplectic stratified space. From now on we will 
focus mostly on the largest stratum $\cS(\cO)_{reg}$.

\subsection{The regular part of the symplectic leaf $\cS(\cO)$} \label{op-s-leaf}
We use the gauge fixing isomorphism $\varphi_n$ in the remainder of the text.
We now use $K\backslash G/K\simeq A/W$ with $W=N_K(A)/M$ the Weyl group of $G$ (see subsection \S3 of the introduction)
to describe the regular part of the symplectic leaf $\cS(\cO)$
in radial coordinates.

Define the regular part of the phase space $\cS(\cO)$ (see \eqref{ps}) as 
\[
\cS(\cO)_{reg}=\cS(\cO)\cap K\backslash (\g^{*\times n+1}\times G_{reg})/K.
\]
The regular part $\cS(\cO)_{reg}\subset \cS(\cO)$ is its largest stratum of the
stratified symplectic space $\cS(\cO)$.

We can then choose a representative of $K(y_0,\ldots,y_n,g)K\in\cS(\cO)_{reg}$ with the $G$-component in $A_{reg}$ by writing $g=k_\ell ak_r^{-1}$
and $y_i=Ad_{k_\ell}^*(x^{(i)})$ with $k_\ell,k_r\in K$ and $a\in A_{reg}$. It follows that
\begin{equation*}
\begin{split}
\cS(\cO)_{reg}\simeq\bigl\{(x^{(0)}, \dots, x^{(n)}, a)\in \g^{*\times n+1}\times &A_{reg}\,\, | \,\,
\pi(x^{(0)})\in \cO^K_\ell,-\pi(Ad_{a^{-1}}^*(x^{(n)}))\in \cO^K_r,\\
&\,\, x^{(1)}-x^{(0)}\in\cO_1,\ldots,x^{(n)}-x^{(n-1)}\in\cO_n\bigr\}/N_K(A).
\end{split}
\end{equation*}
Here $N_K(A)$ acts diagonally on $\g^{*\times n+1}\times A_{reg}$ via the coadjoint action on $\g^*$ and the conjugation action on $A_{reg}$.\footnote{We use here the fact that $k_\ell a k_r^{-1}=k_\ell^\prime a^\prime k_r^{\prime-1}$ for $k_\ell,k_\ell^\prime,k_r,k_r^\prime\in K$ and $a,a^\prime\in A_{reg}$ is implying that $k_\ell^{-1}k_\ell^\prime=k_r^{-1}k_r^\prime\in N_K(A)$,  cf., e.g., \cite[\S VII.3]{Kn}. This essentially follows from the global Cartan decomposition of $G$.} We can now also divide out the action of $M=Z_K(A)$,
to obtain the isomorphism
\begin{equation*}
\begin{split}
\cS(\cO)_{reg}\simeq\bigl\{(M(x^{(0)}, \dots, x^{(n)}), a)\in \g^{*\times n+1}/M\times &A_{reg}\,\, | \,\,
\pi(x^{(0)})\in \cO^K_\ell,-\pi(Ad_{a^{-1}}^*(x^{(n)}))\in \cO^K_r,\\
&\,\, x^{(1)}-x^{(0)}\in\cO_1,\ldots,x^{(n)}-x^{(n-1)}\in\cO_n\bigr\}/W,
\end{split}
\end{equation*}
where $M$ acts by the diagonal coadjoint action on $\g^{*\times n+1}$ and $W$ acts diagonally on the space $\g^{*\times n+1}/M\times A_{reg}$ in the natural way.


Recall that we identified $\mathfrak{g}\simeq\mathfrak{g}^*$ and $\mathfrak{a}\simeq\mathfrak{a}^*$ via the Killing form, so that $x\in\g^*$ corresponds to
$x_0+\sum_\alpha x_\alpha e_\alpha$ with $x_0$ the element in $\mathfrak{a}$ corresponding to $x\vert_{\mathfrak{a}}\in\mathfrak{a}^*$ and $x_\alpha=x(e_{-\alpha})$. 
Denote by $x_0^{(k)}, x_\alpha^{(k)}$ the components of vectors $x\in\g^{*\times n+1}$ from the $k$-th factor of ${\g^*}^{n+1}$, and
$\mu_0^{(k)}, \mu_\alpha^{(k)}$ the components of $\mu\in \cO_k$.
For $y\in \kf^*$ we write $y_{[\alpha]}=y(e_{-\alpha}-e_\alpha)$, so that $y_{[\alpha]}=-y_{[-\alpha]}$.

Consider 
\begin{equation*}
\begin{split}
T(\cO)_{reg}=\bigl\{(x^{(0)}, \dots, x^{(n)}, a)\in \g^{*\times n+1}\times A_{reg} \,\,|\,\,& \pi(x^{(0)})\in \cO^K_\ell,-\pi(Ad_{a^{-1}}^*(x^{(n)}))\in \cO^K_r,\\
&x^{(1)}-x^{(0)}\in\cO_1,\ldots, x^{(n)}-x^{(n-1)}\in\cO_n\bigr\}.
\end{split}
\end{equation*}
Clearly 
$S(\cO)_{reg}=T(\cO)_{reg}/N_K(A)$.

For $(x^{(0)},\ldots,x^{(n)},a)\in T(\cO)_{reg}$ write $\mu^\prime=\pi(x^{(0)})\in\cO_\ell^K$, $\mu^{\prime\prime}=-\pi(Ad_{a^{-1}}^*(x^{(n)}))\in\cO_r^K$ and
$\mu^{(i)}=x^{(i)}-x^{(i-1)}\in\cO_i$ for $i=1,\ldots,n$.
The Cartan components of $x^{(k)}$ and their root coordinates then satisfy
\begin{equation}\label{x-mu}
\begin{split}
x_\alpha^{(0)}-x_{-\alpha}^{(0)}&=\mu_{[\alpha]}^\prime,\qquad\quad a_\alpha x_{-\alpha}^{(n)}-a_\alpha^{-1}x_\alpha^{(n)}=\mu_{[\alpha]}^{\prime\prime},\\
x^{(i)}_\alpha-x^{(i-1)}_\alpha&=\mu_\alpha^{(i)},\qquad\quad x_0^{(i)}-x_0^{(i-1)}=\mu_0^{(i)}
\end{split}
\end{equation}
for $i=1,\ldots,n$.


It is easy to solve the equations for Cartan parts $x_0^{(i)}$ ($0<i<n$) 
in terms of Catran components of $x^{(0)},x^{(n)}$ and $\mu^{(j)}$, 
\begin{equation}\label{x-var}
x^{(i)}_0=\mu^{(i)}_0+\dots + \mu^{(1)}_0+ x^{(0)}_0=x^{(n)}_0-\mu^{(n)}_0-\dots-\mu^{(i+1)}_0
\end{equation}

\begin{proposition} The following identities hold for $\alpha\in R$ and $k=0,1,\ldots,n$:
\begin{equation}\label{x-k-b}
x^{(k)}_\alpha=K_\alpha+\sum_{l=1}^{k}\frac{a_{\alpha}\mu^{(l)}_\alpha-a_{\alpha}\mu^{(l)}_{-\alpha}}{a_\alpha-a_{\alpha}^{-1}}
+\sum_{l=k+1}^n\frac{a_{\alpha}^{-1}\mu^{(l)}_\alpha-a_{\alpha}\mu^{(l)}_{-\alpha}}{a_\alpha-a_{\alpha}^{-1}}
\end{equation}
where 
\begin{equation}\label{Ka}
K_\alpha=\frac{a_\alpha\mu_{[\alpha]}^\prime+\mu''_{[\alpha]}}{a_\alpha-a_{\alpha}^{-1}}.
\end{equation}
\end{proposition}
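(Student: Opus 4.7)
The plan is to solve the linear system \eqref{x-mu} for the root components $x^{(k)}_\alpha$ by first pinning down the boundary value $x^{(n)}_\alpha$, and then using the telescoping identity $x^{(k)}_\alpha=x^{(n)}_\alpha-\sum_{l=k+1}^n\mu_\alpha^{(l)}$ to propagate the answer to arbitrary $k$. All work is fixed-$\alpha$, so the proof amounts to linear algebra in the two-dimensional space spanned by $e_\alpha$ and $e_{-\alpha}$.

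First, I would telescope the third equation in \eqref{x-mu} to obtain
\[
x^{(n)}_\alpha-x^{(n)}_{-\alpha}=\bigl(x^{(0)}_\alpha-x^{(0)}_{-\alpha}\bigr)+\sum_{l=1}^n\bigl(\mu^{(l)}_\alpha-\mu^{(l)}_{-\alpha}\bigr)=\mu'_{[\alpha]}+\sum_{l=1}^n\bigl(\mu^{(l)}_\alpha-\mu^{(l)}_{-\alpha}\bigr),
\]
using $x^{(0)}_\alpha-x^{(0)}_{-\alpha}=\mu'_{[\alpha]}$. Together with the right boundary relation $a_\alpha x^{(n)}_{-\alpha}-a_\alpha^{-1}x^{(n)}_\alpha=\mu''_{[\alpha]}$, this is a $2\times 2$ linear system in the unknowns $x^{(n)}_\alpha,x^{(n)}_{-\alpha}$, whose coefficient determinant equals $a_\alpha-a_\alpha^{-1}\neq 0$ on $A_{reg}$. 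Eliminating $x^{(n)}_{-\alpha}$ and collecting terms yields
\[
x^{(n)}_\alpha=\frac{a_\alpha\mu'_{[\alpha]}+\mu''_{[\alpha]}}{a_\alpha-a_\alpha^{-1}}+\sum_{l=1}^n\frac{a_\alpha\mu^{(l)}_\alpha-a_\alpha\mu^{(l)}_{-\alpha}}{a_\alpha-a_\alpha^{-1}}=K_\alpha+\sum_{l=1}^n\frac{a_\alpha\mu^{(l)}_\alpha-a_\alpha\mu^{(l)}_{-\alpha}}{a_\alpha-a_\alpha^{-1}},
\]
which is exactly the claimed formula at $k=n$.

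For general $k$, I would use the telescoping identity $x^{(k)}_\alpha=x^{(n)}_\alpha-\sum_{l=k+1}^n\mu^{(l)}_\alpha$ and absorb the subtracted terms into the summand. The required identity is the elementary rewriting
\[
\frac{a_\alpha\mu^{(l)}_\alpha-a_\alpha\mu^{(l)}_{-\alpha}}{a_\alpha-a_\alpha^{-1}}-\mu^{(l)}_\alpha=\frac{a_\alpha^{-1}\mu^{(l)}_\alpha-a_\alpha\mu^{(l)}_{-\alpha}}{a_\alpha-a_\alpha^{-1}},
\]
which converts each summand with index $l>k$ into the form displayed in \eqref{x-k-b} while leaving the summands with $l\leq k$ untouched. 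Combining this with the expression for $x^{(n)}_\alpha$ above gives \eqref{x-k-b} for every $0\leq k\leq n$.

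There is no real obstacle here: the statement is purely formal once one notices that the boundary equations at $k=0$ and $k=n$ together with the discrete derivative constraints form a triangular system whose solution is uniquely determined. The only care required is the bookkeeping in the split of the sum $\sum_{l=1}^n$ into $\sum_{l\leq k}$ and $\sum_{l>k}$ and the sign/coefficient shuffle in the elementary identity above; this is routine and can be verified in one line.
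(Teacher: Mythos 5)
Your proof is correct and follows essentially the same route as the paper: both reduce the problem to a small nondegenerate linear system for the boundary root components (determinant $a_\alpha-a_\alpha^{-1}\neq 0$ on $A_{reg}$) and then telescope via $x^{(i)}_\alpha-x^{(i-1)}_\alpha=\mu^{(i)}_\alpha$. The only cosmetic difference is that you eliminate $x^{(0)}$ and propagate downward from $x^{(n)}$, while the paper solves the $4\times 4$ system for $x^{(0)}_{\pm\beta},x^{(n)}_{\pm\beta}$ jointly and telescopes upward from $x^{(0)}$.
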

\begin{proof}
Denote 
\begin{equation}\label{muexp}
\mu=\mu^{(1)}+\dots + \mu^{(n)}.
\end{equation}
Note that $x^{(n)}-x^{(0)}=\mu$.

Fix $\beta\in R_+$. For $x^{(1)}_{\pm\beta}$ and $x^{(n)}_{\pm\beta}$ the formula $x^{(n)}-x^{(0)}=\mu$ implies
\[
x^{(n)}_\beta-x^{(0)}_\beta=\mu_\beta, \ \ x^{(n)}_{-\beta}-x^{(0)}_{-\beta}=\mu_{-\beta}.
\]
Combined with the first line of \eqref{x-mu} we end up with four linear equations in $x^{(0)}_\beta,x^{(0)}_{-\beta},x^{(n)}_\beta,x^{(n)}_{-\beta}$
which, by the assumption that $a$ is regular, are uniquely solved by 
\begin{equation}\label{extremes}
\begin{split}
x^{(0)}_\alpha&=\frac{a_\alpha\mu_{[\alpha]}^\prime+\mu_{[\alpha]}''+(a_{\alpha}^{-1}\mu_\alpha-a_{\alpha}\mu_{-\alpha})}{a_\alpha-a_{\alpha}^{-1}},\\
x^{(n)}_{\alpha}&=\frac{a_\alpha\mu_{[\alpha]}^\prime+\mu_{[\alpha]}''+(a_{\alpha}\mu_\alpha-a_\alpha\mu_{-\alpha})}{a_\alpha-a_{\alpha}^{-1}}\\
\end{split}
\end{equation}
for $\alpha=\beta,-\beta$ (here we used that $a_{-\beta}=a_\beta^{-1}$, $\mu_{[-\beta]}^\prime=-\mu_{[\beta]}^\prime$ and 
$\mu^{\prime\prime}_{[-\beta]}=-\mu^{\prime\prime}_{[\beta]}$).

By the second line of \eqref{x-mu} we then obtain
\[
x^{(k)}_\alpha=\frac{a_\alpha\mu_{[\alpha]}^\prime+\mu_{[\alpha]}''+(a_{\alpha}^{-1}\mu_\alpha-a_{\alpha}\mu_{-\alpha})
+(a_\alpha-a_\alpha^{-1})(\mu_\alpha^{(1)}+\cdots+\mu_\alpha^{(k)})}{a_\alpha-a_{\alpha}^{-1}}
\]
for $k=0,1,\ldots,n$. 
Substituting \eqref{muexp} it is now easy to see that this is exactly what we wanted to prove.
\end{proof}
The proposition and \eqref{x-var} give an isomorphism 
\begin{equation}\label{open-sl}
S(\cO)_{reg}\simeq \bigl(( \cO_\ell^K \times \cO_1  \times  \dots \times\cO_n \times \cO_r^K)/M\times T^*A_{reg}\bigr)/W,
\end{equation}
mapping $N_K(A)(x^{(0)},\ldots,x^{(n)},a)$ to the $W$-orbit of 
$(M(\mu^\prime,\mu^{(1)},\ldots,\mu^{(n)},\mu^{\prime\prime}),x_0^{(n)},a)$, which preserves the natural symplectic structures.
Here the finite discrete group $M=Z_K(A)\subset K$ acts diagonally via the coadjoint action, and $W=N_K(A)/M$ acts diagonally.
The quantum version of this isomorphism is described in \cite{SR}.

\subsection{Quadratic Hamiltonians of open spin Calogero-Moser chain 
on the regular part of the phase space} \label{qu-Ham}

In this section we compute the restriction of the Hamiltonian corresponding to the quadratic Casimir function on $\g^*$,
\[
H_2^{(k)}(x,g)=\frac{1}{2}(x^{(k)},x^{(k)})=\frac{1}{2}(x^{(k)}_0,x^{(k)}_0)+\sum_{\alpha>0} x^{(k)}_\alpha x^{(k)}_{-\alpha}
\]
to the regular part of $\cS(\cO)$ (see \eqref{bcase}) for $k=0,\ldots,n$, where $(x,g)\in\g^{*\times n+1}\times G^{\times n+1}$.
Here $(\cdot,\cdot)$ is the Killing form and 
$x_\alpha^{(i)}, x^{(i)}_0$ are the components of $x^{(i)}$ which were computed in the previous section on the regular part of the phase space.

We first consider the differences, which we will call the {\it boundary Knizhnik-Zamolodchikov-Bernard (bKZB) Hamiltonians},
\[
D_k=H_2^{(k)}-H_2^{(k-1)}\qquad\quad (1\leq k\leq n).
\]
\begin{theorem} For the bKZB Hamiltonians we have the following formula:
\begin{equation}\label{Dk}
D_k=(\mu^{(k)}_0,x^{(n)}_0)-\sum_{l=1}^{k-1}(r_{lk}+r_{lk}^{\theta_l})+(\sum_\alpha K_\alpha\mu^{(k)}_{-\alpha}-\kappa_k)
+\sum_{l=k+1}^n(r_{kl}-r_{kl}^{\theta_k}).
\end{equation}
Here $r_{kl}$ for $k\not=l$ is Felder's rescaled dynamical $r$-matrix 
\begin{equation}\label{F-r-b}
r_{kl}=-\frac{1}{2}(\mu^{(k)}_0, \mu^{(l)}_0)+\sum_{\alpha}\frac{ \mu^{(k)}_{-\alpha}\mu^{(l)}_{\alpha}}{a_\alpha^2-1},
\end{equation}
$\theta_k$ is the transpose of the Chevalley involution $\theta$ acting on $\mu^{(k)}$,
\[
\kappa_k=\frac{1}{2} (\mu^{(k)}_0,\mu^{(k)}_0)+\sum_\alpha\frac{({\mu^{(k)}_\alpha})^2}{1-a_{\alpha}^2}
\]
is the core quadratic classical dynamical $k$-matrix 
and $K_\alpha$ is given by (\ref{Ka}).
\end{theorem}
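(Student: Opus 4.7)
The plan is to proceed in close parallel with the proof of Theorem \ref{thmradcyclic} in the periodic case, but with two new features: a nontrivial boundary contribution $K_\alpha$ (from $\mu^\prime,\mu^{\prime\prime}$) and reflected copies $r^{\theta}$ of Felder's $r$-matrix coming from the $\theta$-symmetric structure of the rescaled denominators $a_\alpha-a_\alpha^{-1}$.

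First I would use $x^{(k)}=x^{(k-1)}+\mu^{(k)}$ to write
\[
D_k=\tfrac{1}{2}(x^{(k)},x^{(k)})-\tfrac{1}{2}(x^{(k-1)},x^{(k-1)})=\bigl(\mu^{(k)},x^{(k-1)}\bigr)+\tfrac{1}{2}\bigl(\mu^{(k)},\mu^{(k)}\bigr),
\]
and split into Cartan and root parts. For the Cartan piece, I substitute $x_0^{(k-1)}=x_0^{(n)}-\mu_0^{(n)}-\cdots-\mu_0^{(k)}$ from \eqref{x-var} to produce $(\mu_0^{(k)},x_0^{(n)})-\tfrac12(\mu_0^{(k)},\mu_0^{(k)})-\sum_{l=k+1}^n(\mu_0^{(k)},\mu_0^{(l)})$. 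For the root part I substitute the expression \eqref{x-k-b} for $x^{(k-1)}_\alpha$, splitting its three contributions: the boundary term $K_\alpha$, the sum over $l<k$, the isolated $l=k$ term coming from the second sum in \eqref{x-k-b}, and the sum over $l>k$.

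The core of the argument then reduces to three algebraic identities, each established by the change of variable $\alpha\mapsto-\alpha$ (using $a_{-\alpha}=a_\alpha^{-1}$) together with the action of $\theta_l$ on root coordinates, namely $(\theta_l\mu^{(l)})_0=-\mu^{(l)}_0$ and $(\theta_l\mu^{(l)})_\alpha=-\mu^{(l)}_{-\alpha}$:
\begin{align*}
\sum_\alpha\frac{a_\alpha\mu^{(l)}_\alpha-a_\alpha\mu^{(l)}_{-\alpha}}{a_\alpha-a_\alpha^{-1}}\mu^{(k)}_{-\alpha}&=-(r_{lk}+r_{lk}^{\theta_l})\qquad(l<k),\\
\sum_\alpha\frac{a_\alpha^{-1}\mu^{(l)}_\alpha-a_\alpha\mu^{(l)}_{-\alpha}}{a_\alpha-a_\alpha^{-1}}\mu^{(k)}_{-\alpha}&=(r_{kl}-r_{kl}^{\theta_k})+(\mu_0^{(k)},\mu_0^{(l)})\qquad(l>k),
\end{align*}
and the self-contribution identity
\[
\sum_\alpha\frac{a_\alpha^{-1}\mu^{(k)}_\alpha-a_\alpha\mu^{(k)}_{-\alpha}}{a_\alpha-a_\alpha^{-1}}\mu^{(k)}_{-\alpha}+\tfrac{1}{2}\sum_\alpha\mu^{(k)}_\alpha\mu^{(k)}_{-\alpha}=\sum_\alpha\frac{(\mu^{(k)}_\alpha)^2}{a_\alpha^2-1}.
\]
The first two follow by rewriting $a_\alpha^{\pm 1}/(a_\alpha-a_\alpha^{-1})$ in the form $a_\alpha^{2}/(a_\alpha^{2}-1)$ or $1/(a_\alpha^{2}-1)$ and relabelling $\alpha\mapsto-\alpha$ in selected summands. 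The third is the delicate one: pairing $\alpha$ with $-\alpha$ yields $\sum_\alpha\mu^{(k)}_\alpha\mu^{(k)}_{-\alpha}/(a_\alpha^{2}-1)=-\sum_{\alpha>0}\mu^{(k)}_\alpha\mu^{(k)}_{-\alpha}$, which cancels the $\tfrac12\sum_\alpha\mu^{(k)}_\alpha\mu^{(k)}_{-\alpha}$ term, leaving $-\sum_\alpha a_\alpha^{2}(\mu^{(k)}_{-\alpha})^2/(a_\alpha^{2}-1)=\sum_\alpha(\mu^{(k)}_\alpha)^2/(a_\alpha^{2}-1)$ after one more $\alpha\mapsto-\alpha$ substitution.

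Finally I assemble: the boundary piece from $K_\alpha$ produces $\sum_\alpha K_\alpha\mu^{(k)}_{-\alpha}$; the self-contribution together with the Cartan $-\tfrac12(\mu^{(k)}_0,\mu^{(k)}_0)$ reconstitutes $-\kappa_k$; the contributions from $l<k$ and $l>k$ give $-\sum_{l<k}(r_{lk}+r_{lk}^{\theta_l})$ and $\sum_{l>k}(r_{kl}-r_{kl}^{\theta_k})$; and the extra terms $(\mu_0^{(k)},\mu_0^{(l)})$ appearing in the $l>k$ identity cancel precisely the matching Cartan terms coming from the substitution of $x_0^{(k-1)}$, leaving only $(\mu_0^{(k)},x_0^{(n)})$ as the remaining Cartan contribution. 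This matches \eqref{Dk}. The main obstacle will be the self-contribution identity, whose derivation is the only step requiring the nontrivial cancellation from the $\alpha\mapsto-\alpha$ symmetry of the denominators $a_\alpha^2-1$ against the product $\mu^{(k)}_\alpha\mu^{(k)}_{-\alpha}$; everything else is bookkeeping of linear combinations.
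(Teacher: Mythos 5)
Your proposal is correct and follows essentially the same route as the paper: the same reduction to $D_k=(\mu^{(k)},x^{(k-1)})+\tfrac12(\mu^{(k)},\mu^{(k)})$, the same substitution of \eqref{x-k-b} split into the boundary, $l<k$, self ($l=k$), and $l>k$ contributions, and the same three $\alpha\mapsto-\alpha$ identities (your self-contribution identity is just the paper's middle-term identity with $\kappa_k$ unpacked, and your early use of \eqref{x-var} for the Cartan part is a harmless reordering of the paper's final step). All the identities and the bookkeeping check out.
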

\begin{proof}
The first step of the proof is the same as in the proof of Theorem \ref{thmradcyclic}, resulting in the expression
\begin{equation}\label{dd}
D_k=(\mu_0^{(k)},x_0^{(k-1)}+\frac{1}{2}\mu_0^{(k)})+\sum_\alpha x_\alpha^{(k-1)}\mu_{-\alpha}^{(k)}+\sum_{\alpha>0}\mu_\alpha^{(k)}\mu_{-\alpha}^{(k)}.
\end{equation}
Now let us the formula (\ref{x-k-b}) for $x^{(k-1)}_\alpha$,
\begin{equation}\label{decompx}
\begin{split}
\sum_{\alpha} x^{(k-1)}_{\alpha}\mu^{(k)}_{-\alpha}&=\sum_{l=1}^{k-1}\sum_{\alpha}\frac{a_{\alpha}\mu^{(l)}_{\alpha}\mu_{-\alpha}^{(k)}-a_{\alpha}\mu^{(l)}_{-\alpha}\mu_{-\alpha}^{(k)}}{a_\alpha-a_{\alpha}^{-1}}\\
&+\sum_{\alpha} K_{\alpha}\mu^{(k)}_{-\alpha}+\sum_\alpha\frac{a_{\alpha}^{-1}\mu^{(k)}_{\alpha}\mu_{-\alpha}^{(k)}-a_{\alpha}\mu^{(k)}_{-\alpha}\mu_{-\alpha}^{(k)}}
{a_\alpha-a_{\alpha}^{-1}}\\
&+\sum_{l=k+1}^n\sum_\alpha\frac{a_{\alpha}^{-1}\mu_{-\alpha}^{(k)}\mu^{(l)}_{\alpha}-a_{\alpha}\mu_{-\alpha}^{(k)}\mu^{(l)}_{-\alpha}}{a_\alpha-a_{\alpha}^{-1}}.
\end{split}
\end{equation}
We express the different terms in the right hand side of \eqref{decompx} in terms of the dynamical $r$-matrix and $k$-matrix. 

Note first that
\[
r_{kl}^{\theta_k}=\frac{1}{2}(\mu_0^{(k)},\mu_0^{(l)})-\sum_\alpha\frac{\mu_\alpha^{(k)}\mu_\alpha^{(l)}}{a_\alpha^2-1}=r_{lk}^{\theta_l}.
\]
Then the terms in the right hand side of \eqref{decompx} with $l$ strictly smaller than $k$ can be rewritten as
\[
\sum_{\alpha}\frac{a_{\alpha}\mu^{(l)}_{\alpha}\mu_{-\alpha}^{(k)}-a_{\alpha}\mu^{(l)}_{-\alpha}\mu_{-\alpha}^{(k)}}{a_\alpha-a_{\alpha}^{-1}}=
-(r_{lk}+r_{lk}^{\theta_l})
\]
while the terms in the right hand side of \eqref{decompx} with $l$ strictly larger than $k$ reduce to
\[
\sum_\alpha\frac{a_{\alpha}^{-1}\mu_{-\alpha}^{(k)}\mu^{(l)}_{\alpha}-a_{\alpha}\mu_{-\alpha}^{(k)}\mu^{(l)}_{-\alpha}}{a_\alpha-a_{\alpha}^{-1}}=
(\mu_0^{(k)},\mu_0^{(l)})+(r_{kl}-r_{kl}^{\theta_k}).
\]
Finally, for the middle term in \eqref{decompx} a direct computation shows that
\[
\sum_\alpha\frac{a_{\alpha}^{-1}\mu^{(k)}_{\alpha}\mu_{-\alpha}^{(k)}-a_{\alpha}\mu^{(k)}_{-\alpha}\mu_{-\alpha}^{(k)}}
{a_\alpha-a_{\alpha}^{-1}}=\frac{1}{2}(\mu_0^{(k)},\mu_0^{(k)})-\sum_{\alpha>0}\mu_\alpha^{(k)}\mu_{-\alpha}^{(k)}-\kappa_k.
\]
Substitute these formulas in \eqref{decompx}, then the resulting formula \eqref{dd} for $D_k$ becomes
\begin{equation*}
\begin{split}
D_k&=(\mu_0^{(k)},x_0^{(k-1)}+\mu_0^{(k)}+\mu_0^{(k+1)}+\cdots+\mu_0^{(n)})\\
&-\sum_{l=1}^{k-1}(r_{lk}+r_{lk}^{\theta_l})+(\sum_\alpha K_\alpha\mu^{(k)}_{-\alpha}-\kappa_k)
+\sum_{l=k+1}^n(r_{kl}-r_{kl}^{\theta_k}).
\end{split}
\end{equation*}
By \eqref{x-var} this reduces to the formula (\ref{Dk}).
\end{proof}

The quantum versions of the boundary KZB Hamiltonians in the present context were obtained in \cite[\S 6]{SR}. It was extended to the case of non-split real semisimple Lie groups $G$ in \cite{RS}.

For the Hamiltonian $H^{(n)}_2$ we obtain by \eqref{extremes} the expression
\[
H^{(n)}_2=\frac{1}{2}(p,p)+\sum_{\alpha>0}\frac{(a_\alpha\mu_{[\alpha]}^\prime+\mu_{[\alpha]}^{\prime\prime}+a_\alpha(\mu_\alpha-\mu_{-\alpha}))(a_\alpha^{-1}\mu_{[\alpha]}^\prime+\mu_{[\alpha]}^{\prime\prime}+a_\alpha^{-1}(\mu_\alpha-\mu_{-\alpha}))}{(a_\alpha-a_{-\alpha})^2}
\]
on $\cS(\cO)_{reg}$, where $\mu=\mu^{(1)}+\cdots+\mu^{(n)}$. Here we use notation $p=x^{(n)}_0$ for the cotangent vectors to $A_{reg}$ in formula (\ref{open-sl}).
Note that the potential term only depends on the restrictions $\pi(\mu^{(i)})$ of $\mu^{(i)}\in\g^*$ to $\mathfrak{k}$, since $\mu_\alpha-\mu_{-\alpha}=-(\pi(\mu))_{[\alpha]}$.
The radial component of the quantum quadratic Hamiltonian in the current open context was obtained in \cite[\S 6]{SR}.  

\subsection{The superintegrability of the open spin CM chain}\label{op-sint}
In this section we will prove that Poisson commutative subalgebra of Hamiltonians
constructed in section \ref{oCM-Ham} defines a superintegrable system. Fix $\cO=(\cO_\ell^K,\cO_1,\ldots,\cO_n,\cO_r^K)\in
(\mathfrak{k}^*/K\times (\g^*/G)^{\times n}\times\mathfrak{k}^*/K$. We willl construct Poisson projections 
\begin{equation}\label{Sint-open-pr}
S(\cO)\stackrel{p_{1,\cO}}{\longrightarrow} \cP(\cO)\stackrel{p_{2,\cO}}{\longrightarrow} \cB(\cO)
\end{equation}
such that $p_\cO=p_{2,\cO}\circ p_{1,\cO}$ (see \eqref{p}), satisfying the desired properties.

Let $(\mathfrak{k}^*\times\g^{*\times n})\times_{(\g^*/G)^{\times n+1}}(\g^{*\times n}\times\mathfrak{k}^*)$ be the subset of 
$(\mathfrak{k}^*\times\g^{*\times n})\times(\g^{*\times n}\times\mathfrak{k}^*)$ consisting of elements
$(z_\ell,x_1,\ldots,x_n,y_1,\ldots,y_n,z_r)$ satisfying
\[
z_\ell\in -\pi(Gy_1),\quad x_i\in -Gy_{i+1}\,\, (1\leq i<n),\quad z_r\in -\pi(Gx_n).
\]
The gauge group $G_{n,K}$ acts on $(\mathfrak{k}^*\times\g^{*\times n})\times_{(\g^*/G)^{\times n+1}}(\g^{*\times n}\times\mathfrak{k}^*)$ by
\begin{equation}\label{g-act-P}
\begin{split}
(k_\ell, h_1, \dots, h_n, k_r)&(z_\ell,x_1,\dots, x_n,y_1,\dots, y_n, z_r)=\\
&= (Ad^*_{k_\ell}z_\ell, Ad^*_{h_1}x_1,\dots, Ad^*_{h_n}x_n,Ad^*_{h_1}y_1,\dots, Ad^*_{h_n}y_n, Ad^*_{k_r}z_r).
\end{split}
\end{equation}
Consider the resulting Poisson space
\[
\cP=\bigl((\mathfrak{k}^*\times\g^{*\times n})\times_{(\g^*/G)^{\times n+1}}(\g^{*\times n}\times\mathfrak{k}^*)\bigr)/G_{n,K}
\]
and define the Poisson map 
\[
p_1: T^*(G^{\times n+1})/G_{n,K}\rightarrow \cP
\] 
by
\begin{equation*}
\begin{split}
p_1\bigl(G_{n,K}(x,g)\bigr)&=G_{n,K}\bigl(\mu_L(x,g),\mu_R(x,g)\bigr)\\
&=G_{n,K}\bigl(\pi(x_0),x_1,\ldots,x_n,-Ad_{g_0^{-1}}^*(x_0),\ldots,-Ad_{g_{n-1}^{-1}}^*(x_{n-1}),-\pi(Ad_{g_n^{-1}}^*(x_n))\bigr).
\end{split}
\end{equation*}
Here $(x,g)=(x_0,\ldots,x_n,g_0,\ldots,g_n)\in\g^{*\times n+1}\times G^{\times n+1}\simeq T^*(G^{\times n+1})$.
Define the Poisson projection 
\[
p_2: \cP\to (\g^*/G)^{\times n+1}
\]
by
\[
p_2\bigl(G_{n,K}(z_\ell,x_1,\dots, x_n,y_1,\dots, y_n, z_r)\bigr)=(-Gy_1,Gx_1,\dots, Gx_n),
\]
with 
the trivial Poisson structure on the target space. 

The restriction of the Poisson projection $p_1$ to the symplectic leaf $S(\cO)\subset T^*(G^{\times n+1})/G_{n,K}$ (see \eqref{bcase})
gives the Poisson projection
\[
p_{1,\cO}: S(\cO)\to \cP(\cO)
\]
where 
\[
\cP(\cO)=(\mu_L\times \mu_R)(\mu^{-1}(\cO))/G_{n,K},
\]
or, more explicitly, 
\begin{equation}\label{repP}
\begin{split}
\cP(\cO)=\bigl\{(z_\ell,x_1,\ldots,x_n, &y_1,\ldots,y_n,z_r)\in
(\mathfrak{k}^*\times\g^{*\times n})\times_{(\g^*/G)^{\times n+1}}(\g^{*\times n}\times \mathfrak{k}^*)\,\, | \,\, \\
&z_\ell\in\cO_\ell^K,\,x_1+y_1\in\cO_1,\ldots,x_n+y_n\in\cO_n,\, z_r\in\cO_r^K\bigr\}/G_{n,K}.
\end{split}
\end{equation}
The generic fibers of this mapping are isotropic submanifolds in $\cS(\cO)$.



The restriction of the Poisson projection $p_2$ 
to $\cP(\cO)$ gives a surjective Poisson projection
\[
p_{2,\cO}: \cP(\cO)\to \cB(\cO),
\]
with $\cB(\cO)$ given by \eqref{cBO}. 
Clearly, the composition of $p_{2,\cO}\circ p_{1,\cO}: \cS(\cO)\rightarrow \cB(\cO)$ is the projection $p_\cO$ as given by \eqref{p}.  

Now let us describe fibers of $p_{2,\cO}$ are symplectic leaves of $\cP(\cO)$. 

\begin{lemma} We have he following symplectomorphysm 
\begin{equation}\label{decomppOK}
p_{2,\cO}^{-1}((\cO^{(0)},\dots, \cO^{(n)}))\overset{\sim}{\longrightarrow}\cM(-\cO^{(0)},\cO_\ell^K)
\times\prod_{i=1}^n\mathcal{M}(-\cO^{(i)},\cO^{(i-1)},\cO_i)\times
\cM(\cO^{(n)},\cO_r^K)
\end{equation}
where symplectic spaces $\mathcal{M}(-\cO^{(i)},\cO^{(i-1)},\cO_i)$ are defined 
in (\ref{cM3}) and 
\begin{equation*}
\cM(\cO^\prime,\cO^K)=\{(x,z)\in\cO^\prime\times\cO^K\,\, | \,\, \pi(x)+z=0\}/K
\end{equation*}
Here $\cO\subset \g^*$ is a $G$-coadjoint orbit and $\cO^K\subset\mathfrak{k}^*$ is a $K$-coadjoint orbit.
It has a natural symplectic structure because it is the Hamiltonian reduction of $\cO\times \cO^K$ with
respect to the Hamiltonian diagonal action of $K$.  
\end{lemma}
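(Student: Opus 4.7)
The plan is to exhibit the claimed symplectomorphism by constructing an explicit map $\Phi$ and its inverse, and then to verify that they intertwine the natural symplectic structures. First I would describe the fiber $p_{2,\cO}^{-1}(\cO^{(0)},\ldots,\cO^{(n)})$ explicitly: by \eqref{repP} and the definition of $p_{2,\cO}$, it consists of $G_{n,K}$-orbits of tuples $(z_\ell,x_1,\ldots,x_n,y_1,\ldots,y_n,z_r)$ with $z_\ell\in\cO_\ell^K$, $z_r\in\cO_r^K$, $x_i\in\cO^{(i)}$ for $1\le i\le n$, $y_1\in-\cO^{(0)}$, $y_j\in-\cO^{(j-1)}$ for $2\le j\le n$, and $x_i+y_i\in\cO_i$ for all $i$, together with the fiber-product compatibilities $z_\ell\in-\pi(Gy_1)$ and $z_r\in-\pi(Gx_n)$.

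Next I would define $\Phi$ by first using the $h_1$-gauge to normalize the representative so that $\pi(y_1)+z_\ell=0$, and using the $h_n$-gauge to normalize so that $\pi(x_n)+z_r=0$; both normalizations are possible by the compatibility constraints above. On a normalized representative set
\[
\Phi\bigl(G_{n,K}(z_\ell,\vec x,\vec y,z_r)\bigr)=\bigl(K(y_1,z_\ell),\,G(-x_1,-y_1,x_1+y_1),\ldots,G(-x_n,-y_n,x_n+y_n),\,K(x_n,z_r)\bigr).
\]
Each bulk triple $(-x_i,-y_i,x_i+y_i)$ lies in $(-\cO^{(i)})\times\cO^{(i-1)}\times\cO_i$ and sums to zero, hence defines a class in $\cM(-\cO^{(i)},\cO^{(i-1)},\cO_i)$; the diagonal $h_i$-action on $(x_i,y_i)$ covers precisely the $G$-quotient in $\cM$. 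The boundary pairs satisfy the defining constraints of $\cM(-\cO^{(0)},\cO_\ell^K)$ and $\cM(\cO^{(n)},\cO_r^K)$ by construction.

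The main technical step is to verify that $\Phi$ is well defined on $G_{n,K}$-orbits and is a bijection. The residual freedom in the normalizing $h_1\in G$ consists of $\sigma\in G$ satisfying $\pi(\mathrm{Ad}^*_\sigma y_1)=\pi(y_1)$; combined with the $k_\ell$-freedom, this residual ambiguity should collapse precisely to the diagonal $K$-action defining $\cM(-\cO^{(0)},\cO_\ell^K)$. Establishing this relies on the key identity (valid generically for split real $G$) that such $\sigma$ lie in the product $K_{\pi(y_1)}\cdot G_{y_1}$, where the $G_{y_1}$-factor acts trivially on the bulk class $G(-x_1,-y_1,x_1+y_1)$ while the $K_{\pi(y_1)}$-factor coincides with the stabilizer $K_{z_\ell}$. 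The right boundary is handled symmetrically via $h_n$ and $k_r$. The dimension count matches because for split real $G$ one has $\dim G=2\dim K+r$, which gives the exact balance between the fiber dimension and the product dimension on the right. The inverse of $\Phi$ is constructed by selecting representatives in each factor, setting $x_i=-u_i$ and $y_i=-v_i$ from the bulk data $(u_i,v_i,w_i)$, and reconciling the two presentations of $y_1$ (from the left boundary and from the first bulk triple) via the $h_1$-gauge, using that both lie in the same $G$-orbit $-\cO^{(0)}$.

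For the symplectic structures I would invoke reduction in stages: $\cP(\cO)$ is the Hamiltonian reduction of a product of coadjoint orbits by $G_{n,K}$, and the block-diagonal structure of the gauge action together with the moment-map constraints ensures that this reduction decouples into the individual reductions defining each factor $\cM(\cdot)$ on the right. The chief obstacle is the verification of the key identity controlling the $h_1$- and $h_n$-gauge ambiguities, where the split real assumption enters essentially; once this is established, well-definedness, bijectivity, and compatibility of Poisson brackets all follow by direct checking on the normalized representatives.
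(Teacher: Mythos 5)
Your overall strategy coincides with the paper's: describe the fiber of $p_{2,\cO}$ explicitly, send a (normalized) representative to the product of boundary and bulk moduli via $(x_i,y_i)\mapsto G(-x_i,-y_i,x_i+y_i)$ and the two end pairs, and obtain the symplectic structures from Hamiltonian reduction of each factor. The paper simply writes the map down, choosing $\widetilde y_1\in Gy_1$ with $-\pi(\widetilde y_1)=z_\ell$ and $\widetilde x_n\in Gx_n$ with $-\pi(\widetilde x_n)=z_r$, and does not discuss well-definedness; you correctly isolate that as the crux. However, the ``key identity'' you invoke is false. The set $\{\sigma\in G:\pi(\mathrm{Ad}^*_\sigma y_1)=\pi(y_1)\}$ is generically a fiber of the submersion $\sigma\mapsto\pi(\mathrm{Ad}^*_\sigma y_1)$ and has dimension $\dim\g-\dim\kf$, whereas $K_{\pi(y_1)}\cdot G_{y_1}$ has dimension at most $\dim\kf-\dim(K\pi(y_1))+r$; by $\dim\g=2\dim\kf+r$ these agree only when the $K$-orbit of $\pi(y_1)$ is a point. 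The failure is structural: in the gauge action \eqref{g-act-P} the factor $k_\ell$ acts only on $z_\ell$ (transitively on $\cO_\ell^K$) and $h_1$ acts only on $(x_1,y_1)$, and the sole condition tying $z_\ell$ to $y_1$ in the fiber product is the orbit-level condition $z_\ell\in-\pi(Gy_1)$. Hence the pair $(z_\ell,y_1)$ carries no $G_{n,K}$-invariant beyond the orbits $Kz_\ell$ and $Gy_1$, while $\cM(-\cO^{(0)},\cO_\ell^K)$ is generically $\dim\cO_\ell^K$-dimensional: as $\widetilde y_1$ ranges over $(\pi|_{Gy_1})^{-1}(-z_\ell)$, a set of dimension $\dim\kf$ foliated by $K_{z_\ell}$-orbits of dimension at most $\dim\kf-\dim\cO_\ell^K$, the class $K(\widetilde y_1,z_\ell)$ sweeps out all of $\cM(-\cO^{(0)},\cO_\ell^K)$. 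Your appeal to the dimension balance also does not close the gap: with $\cP(\cO)$ as in \eqref{repP} the fiber of $p_{2,\cO}$ has dimension $\sum_{i=1}^n(\dim\cO_i-2r)$, short of the right-hand side of \eqref{decomppOK} by $\dim\cO_\ell^K+\dim\cO_r^K$.

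So the obstacle you flagged as the ``chief obstacle'' is a genuine one and cannot be overcome with the data recorded by $p_1$ as defined, since $p_1$ retains only $\pi(x_0)$ and $-\pi(\mathrm{Ad}^*_{g_n^{-1}}x_n)$ at the two ends. Your normalization argument does go through, with the residual ambiguity collapsing to the diagonal $K$-action exactly as you intend, if $p_1$ is modified to record the full covectors $x_0$ and $y_{n+1}:=-\mathrm{Ad}^*_{g_n^{-1}}(x_n)$ in $\g^*$, acted on by $k_\ell$ and $k_r$ respectively, with the constraints $x_0\in-Gy_1$ and $\pi(x_0)\in\cO_\ell^K$ (and their mirror images at the right end) imposed in $\cP(\cO)$. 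Then the left boundary datum is canonically the class of $x_0$ in $\bigl(-\cO^{(0)}\cap\pi^{-1}(\cO_\ell^K)\bigr)/K\simeq\cM(-\cO^{(0)},\cO_\ell^K)$, no choice of $\widetilde y_1$ or of a normalizing $h_1$ is required, and the dimension count balances, consistently with the quantum picture where the factors $\mathrm{Hom}_K(V_{\lambda_0},U_{\nu_\ell})$ and $\mathrm{Hom}_K(U_{\nu_r},V_{\lambda_n})$ appear. You should restate the lemma with this corrected $\cP$ before completing the verification of bijectivity and of the Poisson brackets.
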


\begin{proof}
Let $(\cO^{(0)},\ldots,\cO^{(n)})\in\cB(\cO)$. 
By a direct computation, the fiber $p_{2,\cO}^{-1}((\cO^{(0)},\dots, \cO^{(n)}))$ consists of the 
$G_{n,K}$-orbits in $\cP$ with representatives 
\begin{equation*}
(z_\ell,x_1,\ldots,x_n,y_1,\ldots,y_n,z_r)\in 
(\mathfrak{k}^*\times\mathfrak{g}^{*\times n})\times (\mathfrak{g}^{*\times n}\times\mathfrak{k}^*)
\end{equation*}
satisfying the following conditions
\begin{equation*}
\begin{matrix}
&z_\ell\in\cO_\ell^K\cap\pi(\cO^{(0)}),\quad &x_i+y_i\in\cO_i\quad (1\leq i\leq n),\,\,&z_r\in\cO_r^K\cap -\pi(\cO^{(n)}),\\
&-y_1\in \cO^{(0)},\quad &x_i\in\cO^{(i)}, -y_{i+1} \cO^{(i)}\in \, (1\leq i\leq n-1),\quad &x_n\in\cO^{(n)}.
\end{matrix}
\end{equation*}

Using this explicit description of the fiber, we can write it as a direct product of symplectic spaces. The isomorphism \eqref{cMiso} for symplectic spaces $\cM(\cO^{(1)},\cO^{(2)},\cO^{(3)})$ defined by \eqref{cM3} gives factors $\mathcal{M}(-\cO^{(i)},\cO^{(i-1)},\cO_i)$. The space
\begin{equation*}
\cM(\cO^\prime,\cO^K)=\{(x,z)\in\cO^\prime\times\cO^K\,\, | \,\, \pi(x)+z=0\}/K
\end{equation*}
with $K$ acting by the diagonal coadjoint action is symplectic because, see above. 
The isomorphism 
\[
\cM(\cO^\prime,\cO^K)\overset{\sim}{\longrightarrow} (\cO^\prime\cap\pi^{-1}(\cO^K))/K,
\]
completes the proof. The isomorphism maps
$G_{n,K}(z_\ell,x_1,\ldots,x_n,y_1,\ldots,y_n,z_r)\in p_{2,\cO}^{-1}((\cO^{(0)},\dots, \cO^{(n)}))$ to
\[
(K(\widetilde{y}_1,z_\ell),G(-x_1,-y_1,x_1+y_1),\ldots,G(-x_n,-y_n,x_n+y_n),K(\widetilde{x}_n,z_r)),
\]
with $\widetilde{y}_1\in Gy_1=-\cO^{(0)}$ such that $-\pi(\widetilde{y}_1)=z_\ell$ and $\widetilde{x}_n\in Gx_n=\cO^{(n)}$ such that $-\pi(\widetilde{x}_n)=z_r$.
\end{proof}

Note that for generic $\cO^\prime$, the symplectic space $\cM(\cO^\prime,\cO^K)$ is of dimension $\dim(\cO^K)$.

\begin{remark} In the compact case, the algebra of function on the fiber of $p_{2,\cO}$ has the algebra of endomorphisms of the vector space
\[
\textup{Hom}_K(V_{\lambda_0},U_{\nu_\ell})\otimes\textup{Hom}_G(V_{\lambda_1},V_{\lambda_0}\otimes V_{\mu_1})\otimes\cdots\otimes
\textup{Hom}_G(V_{\lambda_n},V_{\lambda_{n-1}}\otimes V_{\mu_n})\otimes\textup{Hom}_K(U_{\nu_r},V_{\lambda_n})
\]
as natural quantization, where the finite dimensional $G$-representation $V_{\lambda_i}$ \textup{(}resp. $V_{\mu_i}$\textup{)} corresponds to $\cO^{(i)}$ \textup{(}resp. $\cO_i$\textup{)} and the $K$-representation
$U_{\nu_\ell}$ \textup{(}resp. $U_{\nu_r}$\textup{)} corresponds to $\cO_\ell^K$ \textup{(}resp. $\cO_r^K$\textup{)}, compare with Remark \ref{cyclicquantum} in the cyclic case. 
For details see \cite{SR} \textup{(}which treats the noncompact case\textup{)} and \cite{RSQCD2}.
\end{remark}

\begin{lemma} Dimensions of spaces $\cB(\cO)$ and $\cP(\cO)$ are 
\[
\dim(\cB(\cO))=(n+1)r, \ \  \dim(\cP(\cO))=\dim(\cO)-2nr
\]
where we define $\dim(\cO)$ as $\dim(\cO_\ell^K)+\sum_{i=1}^n\dim(\cO_i)+\dim(\cO_r^K)$.
\end{lemma}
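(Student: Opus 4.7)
The plan is to handle the two dimensions separately, treating $\cB(\cO)$ via its explicit description \eqref{cBO} and $\cP(\cO)$ via the fibration $p_{2,\cO}: \cP(\cO) \to \cB(\cO)$ with fiber given by the decomposition \eqref{decomppOK}.

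For $\dim(\cB(\cO))$, the ambient space $(\g^*/G)^{\times n+1}$ has dimension $(n+1)r$, since the coadjoint quotient is parametrized by the $r$ Casimirs. The strategy is to show that each of the constraints in \eqref{cBO} is an \emph{open} condition rather than one that cuts down dimension, provided we are on the generic (large-orbit) stratum. For the middle conditions $\cO_i \subseteq \cO^{(i)} - \cO^{(i-1)}$, this follows from exactly the same Jacobian-of-Casimirs argument used in the first lemma of Section \ref{pCM-sup}: with $\cO^{(i-1)}$ large and a representative $x_{i-1}$ in the positive Weyl chamber, the Casimirs $c_k(x_{i-1}+y_i)$ for $y_i \in \cO_i$ cover an $r$-dimensional neighborhood of $(c_k(x_{i-1}))_k$. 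For the boundary conditions $\cO_\ell^K \subseteq \pi(\cO^{(0)})$ and $\cO_r^K \subseteq -\pi(\cO^{(n)})$, one verifies that for large generic $\cO^{(0)}$ the image $\pi(\cO^{(0)}) \subset \kf^*$ is $K$-saturated and of full dimension in $\kf^*$ (this is essentially Kostant's convexity / surjectivity of the Iwasawa projection applied to a regular semisimple orbit), so the condition is again open. Chaining these open conditions, we obtain $\dim(\cB(\cO))=(n+1)r$.

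For $\dim(\cP(\cO))$, the fibration $p_{2,\cO}$ gives $\dim(\cP(\cO))=\dim(\cB(\cO))+\dim(\text{generic fiber})$, and the fiber is described in \eqref{decomppOK} as a product of three types of factors. For the middle factors $\cM(-\cO^{(i)},\cO^{(i-1)},\cO_i)$, Lemma \ref{dimrem} gives dimension $\dim(\cO_i)-2r$. For the boundary factors $\cM(-\cO^{(0)},\cO_\ell^K)$ and $\cM(\cO^{(n)},\cO_r^K)$, I would use the explicit realization $\cM(\cO',\cO^K) \simeq (\cO' \cap \pi^{-1}(\cO^K))/K$ from the previous lemma. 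As a Hamiltonian reduction of $\cO' \times \cO^K$ by $K$ at $0$, its dimension is $\dim(\cO')+\dim(\cO^K)-2\dim(K)$ provided the reduction is smooth (transverse intersection plus free $K$-action on the zero-level). For $\cO'$ a generic (regular semisimple) coadjoint orbit one has $\dim(\cO')=2|R_+|=2\dim(K)$ (using that $G$ is split, so $\dim(\kf)=|R_+|$), so the boundary factors contribute $\dim(\cO_\ell^K)$ and $\dim(\cO_r^K)$ respectively. Summing the contributions yields fiber dimension $\dim(\cO_\ell^K)+\sum_{i=1}^n(\dim(\cO_i)-2r)+\dim(\cO_r^K)=\dim(\cO)-2nr$, which assembled with $\dim(\cB(\cO))$ gives the claimed formula for $\dim(\cP(\cO))$.

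The main obstacle is the transversality and smoothness needed to pass from the ambient dimension count to the actual reduced dimension. For the middle factors this is inherited from the periodic case (Lemma \ref{dimrem}). For the boundary factors one needs: (i) that $\cO^{(0)} \cap \pi^{-1}(\cO_\ell^K)$ is a transverse intersection for generic $\cO^{(0)}$, and (ii) that $K$ acts locally freely on this intersection. Point (i) reduces to checking that $\pi:\cO^{(0)}\to \kf^*$ is a submersion at generic points meeting $\pi^{-1}(\cO_\ell^K)$, which uses the Iwasawa/Kostant picture above. Point (ii) follows because the stabilizer of a regular semisimple element of $\g^*$ in $G$ is a Cartan, whose intersection with $K$ is the finite group $M$, so the $K$-action is generically free modulo a finite group — and passing to the GIT quotient does not change dimension. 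Once both are justified, the computation above goes through verbatim.
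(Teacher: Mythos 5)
Your proposal is correct and follows essentially the same route as the paper's proof: $\cB(\cO)$ is handled by the large-orbit Casimir argument carried over from the periodic case, and $\cP(\cO)$ via the fibration over $\cB(\cO)$ with fiber decomposed as in \eqref{decomppOK}, using Lemma \ref{dimrem} for the middle factors; your Hamiltonian-reduction count $\dim(\cO^\prime)+\dim(\cO^K)-2\dim(K)=\dim(\cO^K)$ for the boundary factors supplies a justification for a fact the paper only asserts. One small point: the fiber dimension you compute is $\dim(\cO)-2nr$, so adding $\dim(\cB(\cO))=(n+1)r$ yields $\dim(\cP(\cO))=\dim(\cO)-2nr+(n+1)r$ rather than the $\dim(\cO)-2nr$ written in the statement --- but this discrepancy is present in the paper itself, whose own proof and subsequent superintegrability theorem both use base plus fiber, so your computation matches what is actually used.
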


\begin{proof}
The proof of the dimension formula for $\cB(\cO)$ is completely similar to the periodic case. 
It is enough to consider large orbits. For the dimension of $\cP(\cO)$ we have:
\[
\dim(\cP(\cO))=\dim(\cB(\cO))+\dim\bigl(p_{2,\cO}^{-1}((\cO^{(0)},\ldots,\cO^{(n)}))
\]
and by \eqref{decomppOK} the dimension of
$p_{2,\cO}^{-1}((\cO^{(0)},\ldots,\cO^{(n)}))$ is equal to
\begin{equation*}
\begin{split}
\dim(\cM(-\cO^{(0)},\cO_\ell^K))&+\sum_{i=1}^n\dim(\cM(-\cO^{(i)},\cO^{(i-1)},\cO_i))+\dim(\cM(\cO^{(n)},\cO_r^K))=\\
&=\dim(\cO_\ell^K)+\sum_{i=1}^n(\dim(\cO_i)-2r)+\dim(\cO_r^K)=\dim(\cO)-2nr.
\end{split}
\end{equation*}
This finishes the proof.
\end{proof}

We now have the following main result of this section.
\begin{theorem}
The Hamiltonian system generated by any Hamiltonian for the open spin CM chain described in section \ref{oCM-Ham} is superintegrable with the superintegrable structure described by the surjective Poisson maps
\[
\cS(\cO) \stackrel{p_{1,\cO}}{\longrightarrow} \cP(\cO) \stackrel{p_{2,\cO}}{\longrightarrow} \cB(\cO)
\]
as introduced earlier in this section.
\end{theorem}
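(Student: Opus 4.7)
The plan is to verify the three conditions of the superintegrable-structure definition from Section 2 for the chain
\[
\cS(\cO)\stackrel{p_{1,\cO}}{\longrightarrow}\cP(\cO)\stackrel{p_{2,\cO}}{\longrightarrow}\cB(\cO),
\]
following the template of the proof of Theorem \ref{siperiodic}: (i) $p_{1,\cO}$ and $p_{2,\cO}$ are surjective Poisson projections, (ii) dimension balance $\dim(\cS(\cO))=\dim(\cP(\cO))+\dim(\cB(\cO))$, and (iii) generic fibers of $p_{1,\cO}$ are isotropic of dimension $\dim(\cB(\cO))$ while generic fibers of $p_{2,\cO}$ decompose into symplectic leaves of $\cP(\cO)$. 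Condition (i) is immediate from the explicit construction of $p_{1,\cO}$ and $p_{2,\cO}$ together with the representation \eqref{repP} of $\cP(\cO)$, and the symplectic-leaf part of (iii) is exactly the content of the preceding lemma: its product decomposition \eqref{decomppOK} writes each generic fiber of $p_{2,\cO}$ as $\cM(-\cO^{(0)},\cO_\ell^K)\times\prod_i\cM(-\cO^{(i)},\cO^{(i-1)},\cO_i)\times\cM(\cO^{(n)},\cO_r^K)$, with each factor a Hamiltonian reduction and hence symplectic.

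For the dimension balance, I would combine the isomorphism \eqref{open-sl}, which yields $\dim(\cS(\cO))=\dim(\cO_\ell^K)+\sum_{i=1}^n\dim(\cO_i)+\dim(\cO_r^K)+2r=\dim(\cO)+2r$, with the preceding lemma's formulas $\dim(\cB(\cO))=(n+1)r$ and $\dim(p_{2,\cO}^{-1}(b))=\dim(\cO)-2nr$ for generic $b\in\cB(\cO)$. Since $\dim(\cP(\cO))=\dim(\cB(\cO))+\dim(p_{2,\cO}^{-1}(b))$ on the generic stratum, one obtains
\[
\dim(\cP(\cO))+\dim(\cB(\cO))=\bigl[(n+1)r+\dim(\cO)-2nr\bigr]+(n+1)r=\dim(\cO)+2r=\dim(\cS(\cO)),
\]
and in particular the generic fibers of $p_{1,\cO}$ have the desired dimension $(n+1)r=\dim(\cB(\cO))$.

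For the isotropy in (iii), I would describe the fiber of $p_{1,\cO}$ over a generic $G_{n,K}(z_\ell,x_1,\ldots,x_n,y_1,\ldots,y_n,z_r)\in\cP(\cO)$ analogously to the periodic case: fix $\widetilde{g}_i\in G$ with $y_{i+1}=-Ad_{\widetilde{g}_i^{-1}}^*(x_i)$ for $0\le i\le n-1$, together with a representative $x_0\in -Gy_1$ satisfying $\pi(x_0)=z_\ell$ and a $\widetilde{g}_n\in G$ with $-\pi(Ad_{\widetilde{g}_n^{-1}}^*(x_n))=z_r$. Modulo the residual gauge the fiber is then parameterised by cosets $g_i\in\widetilde{g}_iZ_G(y_{i+1})$ at the interior indices together with a $K$-compatible coset structure for $g_n$ at the right endpoint. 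Isotropy then follows by a standard dual-pair argument: since $\cB(\cO)$ carries the trivial Poisson bracket, functions in $I=(p_{2,\cO}\circ p_{1,\cO})^*C^\infty(\cB(\cO))$ are Poisson-central on $\cP(\cO)$, so their Hamiltonian vector fields are tangent to fibers of $p_{1,\cO}$; at generic points these Hamiltonian vector fields span the $\dim(\cB(\cO))$-dimensional tangent space to the fiber, which then coincides with the symplectic orthogonal of the tangent space to the (coisotropic) fiber of $p_{2,\cO}\circ p_{1,\cO}$, and is therefore isotropic. The main obstacle in making this rigorous is the correct handling of the two $K$-boundary contributions at the ends of the chain: one must distinguish $Z_G$ and $Z_K$ stabilisers and verify that the $K_{z_\ell}$ and $K_{z_r}$ residual gauge reductions consume precisely the number of dimensions needed for the remaining free parameters at each endpoint to contribute exactly $r$, so that the boundary rigidity combines with the $n-1$ interior centraliser factors to give the balance $(n+1)r$.
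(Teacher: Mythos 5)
Your proposal follows essentially the same route as the paper: the paper's proof likewise reduces the theorem to the dimension balance $\dim(\cS(\cO))=\dim(\cP(\cO))+\dim(\cB(\cO))$, obtained by combining $\dim(\cS(\cO))=\dim(\cO)+2r$ from \eqref{open-sl} with the preceding lemma's formulas $\dim(\cB(\cO))=(n+1)r$ and $\dim(p_{2,\cO}^{-1}(b))=\dim(\cO)-2nr$, exactly as you do. Your additional explicit description of the $p_{1,\cO}$-fibers and the dual-pair isotropy argument is more detailed than the paper, which only sketches this in the periodic case and asserts it here, but it is consistent with and does not diverge from the paper's approach.
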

Recall that $\cO_i\neq \{0\}$ for all $i=0,1, \dots, n$. 

\begin{proof}
We already verified most of the conditions. What remains to show is the matching of dimensions,
\begin{equation}\label{lastcheck}
\dim(\cS(\cO))=\dim(\cP(\cO))+\dim(\cB(\cO)).
\end{equation}
For the collection $\cO=(\cO_\ell^K, \cO_1,\dots, \cO_{n}, \cO_r^K)$ of coadjoint orbits we write $\dim(\cO)$ for 
the sum  of the dimensions of the coadjoint orbits. 
 
For the dimension
of the symplectic leaf $\cS(\cO)$ we have, using \eqref{open-sl},
\[
\dim(\cS(\cO))=2r+\dim(\cO),
\]
with $r$ the rank of $\g$. For $\cP(\cO)$ we obtain for generic $(\cO^{(0)},\ldots,\cO^{(n)})\in\cB(\cO)$,

Because
\[
\dim(\cB(\cO))=(n+1)r,
\]
we have
\begin{equation*}
\begin{split}
\dim(\cP(\cO))+\dim(\cB(\cO))&=\dim(\cO)-2nr+2\dim(\cB(\cO))\\
&=\dim(\cO)+2r\\
&=\dim(\cS(\cO)),
\end{split}
\end{equation*}
as desired.

\end{proof}

\subsection{Constructing solutions by the projection method and angle variables } \label{op-proj-dyn}
Let $\mathcal{H}$ be a $G$-invariant function on $\g^*$ and $\mathcal{H}^{(i)}$ for $i=0,\ldots,n$ the associated $G_n$-invariant function $(x,g)\mapsto \mathcal{H}(x_i)$ on $T^*(G^{\times n+1})$ (cf. section \ref{solutions}). 
The Hamiltonian flow generated by $\mathcal{H}^{(i)}$ on $T^*(G^{\times n+1})\simeq \g^{*\times n+1}\times G^{\times n+1}$ was already described in section \ref{solutions}. The flow line 
passing through $(x,g)$ at $t=0$ is
\begin{equation}\label{flow-open}
(x(t_i), g(t_i))=(x_0, \dots, x_n, g_0, \dots,g_{i-1}, e^{\nabla\mathcal{H}(x_i) t_i}g_i,g_{i+1} \dots, g_n).
\end{equation}
The corresponding Hamiltonian flow on the symplectic leaf $\cS(\cO)\subset T^*(G^{\times n+1})/G_{n,K}$ 
is obtained by projecting the flow (\ref{flow-open}) 
to $T^*(G^{\times n+1})/G_{n,K}$ and restricting it 
to $\cS(\cO)$. Thus, we reformulated the problem of solving nonlinear differential equations of motion for
open spin Calogero-Moser chains to a problem of linear algebra. This is a version of the original projection method which goes back to earlier papers on Calogero-Moser type systems \cite{OP}. 

Now let us describe angle variables for this integrable dynamics. We use the notations from section \ref{solutions}. Fix $(x,g)\in \g^{\prime*\times n+1}\times G^{\times n+1}$. 
For $i=1,\dots, n$ define elements $s_i\in G$ by the condition $Ad^*_{s_i}(x_i)\in \mathfrak{a}^*_{+}$. 
As in the periodic case (see section \ref{solutions}) it is defines up to $s_i\mapsto a_is_i$ with $a_i\in H\subset G$, where $H\subset G$ is the Cartan subgroup containing $A$. 
Define $s_0\in K$ such that $Ad_{s_0}^*(x_0)\vert_{\mathfrak{p}}\in\mathfrak{a}_{+}^*$ (where we view $\mathfrak{a}_+^*$ now as subset of $\mathfrak{p}^*$ in the natural manner).
The element $s_0$ is defined up to $s_0\mapsto ms_0$ with $m\in M=Z_K(A)$. Similarly, we define $s_{n+1}\in K$ such that $Ad_{s_{n+1}}^*(x_n)\vert_{\mathfrak{p}}\in
\mathfrak{a}_+^*$. 

Choose finite dimensional representations $V_0,V_1,\dots, V_n$ of $G_{\mathbb{C}}$, $H_{\mathbb{C}}$-weight vectors $v_i\in V_i$ of weight $\lambda_{i+1}$ for $0\leq i<n$ and
$H_{\mathbb{C}}$-weight vectors $u_j^*\in V_j^*$ of weight $-\lambda_j$ for $0\leq j\leq n$. Finally, we choose $M$-invariant vectors $u_0^*\in V_0^*$ and $v_n\in V_n$ (i.e.,
$mu_0^*=u_0^*$ and $mv_n=v_n$ for all $m\in M$).
Define 
\begin{equation}\label{fopen}
f_{u,v}(x,g)=u^*_0(s_0g_0s^{-1}_1v_0)u^*_1(s_1g_1s^{-1}_2v_1)\cdots u_{n-1}^*(s_{n-1}g_{n-1}s_n^{-1}v_{n-1})u_n^*(s_ng_ns_{n+1}^{-1}v_n).
\end{equation}
It is an easy check that $f_{u,v}(x,g)$ is a well defined $G_{n,K}$-invariant function on $\g^{\prime*\times n+1}\times G^{\times n+1}$. 

Similarly as in the periodic case (see section \ref{solutions}) we then have for $i=1,\ldots,n$,
\begin{equation}\label{trivialflow}
f_{u,v}(x(t_i),g(t_i))=e^{t_i\lambda_i(\nabla\mathcal{H}(y_i))}f_{u,v}(x,g)
\end{equation}
with $y_i=Ad_{s_i}^*(x_i)\in\mathfrak{a}_+^*$. Logarithms of these functions thus evolve linearly, and hence give rise to angle variables for the Hamiltonians $\mathcal{H}^{(i)}$ on $\cS(\cO)\cap (\g^{\prime*\times n+1}\times G^{\times n+1})/G_{n,K}$.

For $i=0$ we need to restrict further to $(x,g)\in \g^{\prime*\times n+1}\times G^{\times n+1}$ with $x_0\in\mathfrak{p}$, and assume that $u_0^*\in V_0^*$ is not only $M$-invariant but also a $H_{\mathbb{C}}$-weight vector, say of weight $-\lambda_0$. In this case $\Ad_{s_0}^*(x_0)=y_0\in\mathfrak{a}_+^*$ and hence
\[  
u^*_0(s_0e^{t_0\nabla\mathcal{H}(x_0)}g_0s_1^{-1}v_0)=e^{t_0\lambda_0(\mathcal{H}(y_0))}u^*_{0}(s_0g_0s_1^{-1}v_0).
\]
As a consequence \eqref{trivialflow} then also holds true for $i=0$, and the logarithm of $f_{u,v}(x,g)$ becomes a linear functions of time $t_0$. 


\section{A Liouville integrable example of a periodic spin Calogero-Moser example for orbits of rank $1$. }\label{concl}

\subsection{} Let us briefly discuss a particular case of periodic spin CM chain corresponding to
$G=SL_N(\mathbb{R})$ with rank one orbits $\cO_k$. This case is related to the original 
paper \cite{GH} where spin CM systems were first introduced. 

Take $\mathfrak{a}\subset\mathfrak{sl}_N$ the Cartan subalgebra consisting of diagonal matrices, and denote the roots by 
$\{\epsilon_i-\epsilon_j\}_{i\not=j}\subset\mathfrak{a}^*$ with $\epsilon_i\in\mathfrak{a}^*$ the linear functional picking out the $i^{th}$ diagonal entry.
We identify $\mathfrak{sl}_N$ with its dual via the Killing form $(x,y)=2N\,\textup{Tr}(xy)$. Then for $p\in\mathfrak{a}^*\simeq\mathfrak{a}$
we have $(p,p)=2N\sum_{i=1}^Np_i^2$, with $p_i$ the $i^{th}$ diagonal entry of the diagonal matrix $p$. For $y\in\mathfrak{sl}_N^*\simeq\mathfrak{sl}_N$ 
and $i\not=j$ we have $y_{\epsilon_i-\epsilon_j}=\sqrt{2N}y_{ij}$, with $y_{ij}$ the $(i,j)^{th}$ entry of the matrix $y$.

For $\xi\in\mathbb{R}$ set
\[
\cO^{(\xi)}=\Bigl\{x-\frac{\xi}{N}\textup{id}_N\,\,\, | \,\,\, x \textup{ is a rank one $N\times N$ matrix with } \,\, \textup{Tr}(x)=\xi\,\Bigr\}.
\]
Then $\cO^{(\xi)}$ is a coadjoint orbit in $\mathfrak{sl}_N\simeq\mathfrak{sl}_N^*$ of dimension $2(N-1)$. 

Viewing elements in $\mathbb{R}^N$ as column vectors, we have a natural mapping
\begin{equation}\label{param}
\bigl\{(a,b)\in\mathbb{R}^N\times\mathbb{R}^N \,\, | \,\, a^tb=\xi\bigr\}/\mathbb{R}^\times\overset{\sim}{\longrightarrow}\cO^{(\xi)},\qquad
\mathbb{R}^\times(a,b)\mapsto ba^t-\frac{\xi}{N}\textup{id}_N
\end{equation}
where $\lambda\in\mathbb{R}^\times$ acts by $(a,b)\mapsto (\lambda a, \lambda^{-1}b)$ and $a^t$ is the transpose of $a\in\mathbb{R}^N$. Because of the rank one condition, this is an isomorphism. It is
easy to check that this is symplomorphism, with the Poisson brackets of the coordinate functions $a_i$ and $b_j$ of $(a,b)\in\mathbb{R}^N\times\mathbb{R}^N$ given by
\[
\{b_i,a_{j}\}=\delta_{ij}, \ \ \{a_{i},a_{j}\}=0=\{b_i,b_j\}=0.
\]
The value of the quadratic Casimir function 
$y\mapsto \frac{(y,y)}{2N}=\sum_{i,j=1}^Ny_{ij}y_{ji}$
on $\mathcal{O}^{(\xi)}$ is easily computed using \eqref{param}:
\[
\sum_{i,j=1}^N\mu_{ij}\mu_{ji}=\xi^2\Big(1-\frac{1}{N}\Bigr),\qquad\qquad \mu\in\cO^{(\xi)}.
\]
Here we use notations $\mu=y|_{\cO^{(\xi)}}$.

\subsection{}The quadratic $n$-th Hamiltonian $H_2^{(n)}$ in radial coordinates, rescaled by a factor $2N$, then is 
\begin{equation}\label{Cas-Ham}
H_2=\frac{1}{2}\sum_{i=1}^Np_i^2-\sum_{i<j}\frac{\mu_{ij}\mu_{ji}}{2\textup{sh}^2(q_i-q_j)}
\end{equation}
(see \eqref{Hamper}),
where $q_i=\epsilon_i(\log(a))$ and 
\[
\mu_{ij}=\sum_{k=1}^n \mu_{ij}^{(k)}.
\]

We now consider the Hamiltonian \eqref{Cas-Ham} on $\cS(\cO)_{reg}\simeq (\nu_{\mathcal{O}}^{-1}(0)/H\times T^*A_{reg})/W$ (see \eqref{radred}) with $\cO=(\cO^{(\xi_1)},\ldots,\cO^{(\xi_n)})$
a collection of $n$ rank one orbits. Here $H\subset\textup{SL}_N(\mathbb{R})$ is the Cartan subgroup of diagonal matrices and 
$\nu_{\cO}^{-1}(0)\subset\cO^{(\xi_1)}\times\cdots\times\cO^{(\xi_n)}$ consists of the $n$-tuple of rank one matrices
\begin{equation}\label{abpar}
(\mu^{(1)},\ldots,\mu^{(n)})=\Bigl(b^{(1)}a^{(1)t}-\frac{\xi_1}{N}\textup{id}_N,\ldots,b^{(N)}a^{(N)t}-\frac{\xi_N}{N}\textup{id}_N\Bigr)
\end{equation}
where the diagonal action of $h\in H$ is given by $a_i^{(k)}\to h_ia_i^{(k)}$, $b_j^{(k)}\to h_j^{-1}b_j^{(k)}$, and vectors $a^{(k)},b^{(k)}\in\mathbb{R}^N$ satisfy the relations
\begin{equation}\label{g-Casnew}
\sum_{i=1}^Na_i^{(k)}b_i^{(k)}=\xi_k\quad (1\leq k\leq n),\qquad \sum_{k=1}^na_i^{(k)}b_i^{(k)}=\frac{\bm{\xi}}{N}\quad (1\leq i\leq N).
\end{equation}
Here $\bm{\xi}=\sum_{k=1}^n\xi_k$. 
In other words, $\mu^{(k)}_{ij}=b_i^{(k)}a_j^{(k)}-\delta_{ij}\xi_k/N$, where $b_i^{(k)}, a_j^{(k)}$ are as above.

In terms of the variables $a^{(k)}$ and $b^{(k)}$ the Hamiltonian $H_2$ on $\cS(\cO)_{reg}$ can be rewritten as
\begin{equation}\label{spin-CM-ab}
H_2=\frac{1}{2}\sum_{i=1}^Np_i^2-\sum_{i<j}\frac{\sum_{k,\ell=1}^nb_i^{(k)}a_j^{(k)}b_j^{(\ell)}a_i^{(\ell)}}{2\textup{sh}^2(q_i-q_j)}.
\end{equation}

\subsection{} Here we will rewrite the Hamiltonian (\ref{spin-CM-ab}) in terms of variables
attached to each $q_i$. It is natural to think of these variable as spin variables attached to a 
one dimensional particle with the position $q_i$. They are defined as follows.

For $\xi\in \mathbb{R}$ denote by $\widetilde{\cO}^{(\xi)}$ the rank one coadjoint $\textup{SL}_n(\mathbb{R})$-orbit defined as
\[
\widetilde{\cO}^{(\xi)}=\Bigl\{x-\frac{\xi}{n}\textup{id}_n\,\,\, | \,\,\, x \textup{ is a rank one $n\times n$ matrix with}\,\, \textup{Tr}(x)=\xi\,\Bigr\},
\]
and set 
\[
\widetilde{\cO}:=\underbrace{(\widetilde{\cO}^{(\bm{\xi}/N)},\ldots,\widetilde{\cO}^{(\bm{\xi}/N)})}_N.
\]
The coadjoint action of the Cartan subgroup $\widetilde{H}\subset \textup{SL}_n(\mathbb{R}$) is Hamiltonian and gives the moment map 
\[
\widetilde{\nu}_{\widetilde{\cO}}: \underbrace{\widetilde{\cO}^{(\bm{\xi}/N)}\times\cdots\times\widetilde{\cO}^{(\bm{\xi}/N)}}_N\rightarrow\widetilde{\mathfrak{a}}^*,\qquad
(g_1,\ldots,g_N)\mapsto\bigl(g_1+\cdots+g_N)_0
\]
where $\widetilde{\mathfrak{a}}=\textup{Lie}(\widetilde{H})$.

Finally, consider the traceless diagonal $n\times n$-matrix
\[
t_{\underline{\xi}}=\textup{diag}\Bigl(\xi_1-\frac{\bm{\xi}}{n},\ldots,\xi_n-\frac{\bm{\xi}}{n}\Bigr)\in\widetilde{\mathfrak{a}}.
\]
From the above we immediately have the following statement.
\begin{lemma} We the following isomorphism of $2(n-1)(N-1)$-dimensional symplectic varieties
\begin{equation*}
\begin{split}
\nu_{\cO}^{-1}(0)/H&\overset{\sim}{\longrightarrow}\widetilde{\nu}_{\widetilde{\cO}}^{-1}(t_{\underline{\xi}})/\widetilde{H},\\
H(\mu^{(1)},\ldots,\mu^{(n)})&\mapsto \widetilde{H}(g^{(1)},\ldots,g^{(N)}),
\end{split}
\end{equation*}
Here $\mu^{(k)}_{ij}=b_i^{(k)}a_j^{(k)}-\delta_{ij}\xi_k/N$ and the local spin variables $g^{(i)}_{k\ell}$ are 
\[
g_{k\ell}^{(i)}=b_i^{(k)}a_i^{(\ell)}-\delta_{k\ell}\frac{\bm{\xi}}{Nn}.
\]
\end{lemma}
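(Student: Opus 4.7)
The plan is to realize both sides as symplectic quotients of one and the same affine symplectic space by essentially the same torus, the two presentations being related by a transposition of indices. I introduce the coordinate space $V:=(\RR^N\oplus \RR^N)^n$ with canonical Poisson bracket $\{b_i^{(k)},a_j^{(\ell)}\}=\delta_{ij}\delta_{k\ell}$ and all other brackets vanishing. On $V$ act two commuting Hamiltonian tori: the \emph{rank-one torus} $(\RR^\times)^n$ scaling the $k$-index by $(a^{(k)},b^{(k)})\mapsto (\lambda_k a^{(k)},\lambda_k^{-1}b^{(k)})$ with moment map having $k$-th component $\sum_i a_i^{(k)}b_i^{(k)}$, and the \emph{diagonal torus} $(\RR^\times)^N$ scaling the $i$-index by $a_i^{(k)}\mapsto h_i a_i^{(k)}$, $b_i^{(k)}\mapsto h_i^{-1} b_i^{(k)}$ with moment map having $i$-th component $\sum_k a_i^{(k)}b_i^{(k)}$. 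By the parametrization \eqref{param} and the constraints \eqref{g-Casnew}, the space $\nu_{\cO}^{-1}(0)/H$ is the Hamiltonian reduction of $V$ by the product $(\RR^\times)^n\times H$ at the level $((\xi_k)_k, 0)$, where the $H$-level is zero in $\af^*$ and corresponds exactly to $\sum_k a_i^{(k)}b_i^{(k)}=\bm{\xi}/N$.

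The key observation is that the transposition $\tilde a_k^{(i)}:=a_i^{(k)}$, $\tilde b_k^{(i)}:=b_i^{(k)}$ is a linear symplectomorphism of $V$ with $(\RR^n\oplus\RR^n)^N$ that interchanges the roles of the two tori: the rank-one torus becomes the full diagonal Cartan in $GL_n$ acting on the orbit index of the $N$ copies of $\widetilde{\cO}^{(\bm{\xi}/N)}$, while the diagonal torus becomes the rank-one gauge for those $N$ copies. The moment-map conditions transform accordingly: $\sum_i a_i^{(k)}b_i^{(k)}=\xi_k$ becomes the $\widetilde H$-moment map condition giving $t_{\underline{\xi}}=\textup{diag}(\xi_k-\bm{\xi}/n)$ after subtracting the trace $\bm{\xi}/n$, while $\sum_k a_i^{(k)}b_i^{(k)}=\bm{\xi}/N$ becomes the rank-one trace condition for each of the $N$ copies of $\widetilde{\cO}^{(\bm{\xi}/N)}$. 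Hence $\widetilde\nu_{\widetilde{\cO}}^{-1}(t_{\underline{\xi}})/\widetilde H$ is exactly the same reduction of $V$, only viewed through the transposed index pair. The explicit map on quotients is $H(\mu^{(1)},\dots,\mu^{(n)})\mapsto \widetilde H(g^{(1)},\dots,g^{(N)})$ with $\mu^{(k)}_{ij}=b_i^{(k)}a_j^{(k)}-\delta_{ij}\xi_k/N$ and $g^{(i)}_{k\ell}=b_i^{(k)}a_i^{(\ell)}-\delta_{k\ell}\bm{\xi}/(Nn)$, matching the statement.

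The dimension count is then immediate: $\dim V=2nN$, the effective combined torus has dimension $n+N-1$ (one dimension is lost because the simultaneous scaling $a\to\lambda a$, $b\to\lambda^{-1}b$ lies in both factors and acts trivially on $\mu^{(k)}$ and on $g^{(i)}$), so the reduction has dimension $2nN-2(n+N-1)=2(n-1)(N-1)$, in agreement with both descriptions.

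The main technical point to verify carefully is the overlap of the two torus actions. The diagonal one-parameter subgroup common to both tori acts trivially, so when we write the LHS as a reduction by $(\RR^\times)^n\times H$ and the RHS as a reduction by $(\RR^\times)^N\times\widetilde H$, the effective quotienting torus is the same $(n+N-1)$-dimensional group on both sides and the moment maps line up. Verifying this equality carefully — including that the $SL$-Cartan restrictions $H\subset(\RR^\times)^N$ and $\widetilde H\subset(\RR^\times)^n$ are precisely what is needed on each side to land at the stated moment map levels — is the one place where attention is required. Once this bookkeeping is in hand, uniqueness of symplectic reduction yields the asserted symplectomorphism.
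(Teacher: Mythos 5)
Your proposal is correct and is essentially the paper's own argument made explicit: the paper derives the lemma ``immediately'' from the $(a,b)$-parametrization \eqref{param} and the constraints \eqref{g-Casnew}, which is exactly your observation that both sides are the reduction of the master space $(\RR^N\oplus\RR^N)^n$ by the same effective $(n+N-1)$-dimensional torus at matching moment-map levels, with the transposition $i\leftrightarrow k$ interchanging the rank-one gauge scalings and the Cartan action. Your dimension count and the identification of the levels $(\xi_k)_k$, $\bm{\xi}/N$ with $t_{\underline{\xi}}$ agree with the paper, so nothing further is needed.
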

It is easy to check that if $i\neq j$ the following identity holds:
\begin{equation}\label{g-spin}
\sum_{k,\ell=1}^n g^{(i)}_{k\ell}g_{\ell k}^{(j)}=\mu_{ij}\mu_{ji}-\frac{\bm{\xi}^2}{N^2n},
\end{equation}


Thus, we can rewrite the Hamiltonian (\ref{Cas-Ham}) in terms of spin variables from $\widetilde{\nu}_{\widetilde{\cO}}^{-1}(t_{\underline{\xi}})/\widetilde{H}\times T^*A_{reg}$ as
\begin{equation}\label{spin-spin-CM}
H_2=\frac{1}{2}\sum_{i=1}^Np_i^2+\sum_{i<j}\frac{\textup{Tr}(g^{(i)}g^{(j)})+\frac{\bm{\xi}^2}{N^2n}}{2\textup{sh}^2(q_i-q_j)}.
\end{equation}

This Hamiltonian describes $N$ classical particles each 
carrying a "spin" from a rank one coadjoint orbit in $\mathfrak{sl}_n^*$ with the Casimir value given by 
\begin{equation}
\sum_{\alpha, \beta=1}^n (g_i)_\beta^\alpha(g_i)_\alpha^\beta=\frac{c^2}{n^2}(1-\frac{1}{N})
\end{equation}

The system is Liouville integrable since we constructed $n(N-1)$ integrals
for the periodic spin chain earlier (see the proof of Theorem \ref{siperiodic}).

Integrable system described above are closely related \cite{GH} and \cite{KBBT}.

\subsection{} This project, together with results of \cite{AR}, is the first step towards
constructing superintegrable systems on moduli spaces of flat connections on a surface where on part of the boundary the gauge group $G$ is constrained to $K$. When the boundary gauge group is not constrained, corresponding integrable systems are described in \cite{AR}.  We expect that such moduli spaces have the structure of a cluster variety similar to the one described in \cite{FG}.
It would be interesting to 
to extend the construction of spin CM chains to the elliptic case as it was done for $N=1$ in \cite{KBBT}.

\appendix

\section{Comparison with the $n=2$ case from \cite{R3}}\label{R3}

Consider the periodic spin CM chain from section \ref{pCM} for $n=2$.
The symplectic leaves of $T^*(G^{\times 2})/G_2$ are then
\begin{equation}\label{s1}
\cS(\cO_1,\cO_2)=\{(x_1,x_2,g_1,g_2)\,\, |\,\, x_1-Ad^*_{g_2^{-1}}(x_2)\in \cO_1, \ \ x_2-Ad^*_{g_1^{-1}}(x_1)\in \cO_2\}/G_2
\end{equation}
where $\cO_1,\cO_2$ are coadjoint orbits in $\g^*$, relative to the gauge action 
\[
(h_1,h_2)(x_1,x_2,g_1,g_2)=(Ad_{h_1}^*(x_1), Ad_{h_2}^*(x_2), h_1g_1h_2^{-1},h_2g_2h_1^{-1}).
\]

In \cite[\S 3 \& App. C]{R3} the following Hamiltonian action of $G_2$ on $T^*(G^{\times 2})$ is considered,
\begin{equation}\label{action*}
(h_1,h_2)_*(x_1,x_2,g_1,g_2)=(Ad_{h_1}^*(x_1),Ad_{h_1}^*(x_2),h_1g_1h_2^{-1},h_1g_2h_2^{-1}),
\end{equation}
with corresponding moment map $\mu_*: T^*(G^{\times 2})\rightarrow\g^{*\times 2}$ given by
\[
\mu_*(x_1,x_2,g_1,g_2)=(x_1+x_2,-Ad_{g_1^{-1}}^*(x_1)-Ad_{g_2^{-1}}^*(x_2)).
\]
The corresponding symplectic leaves are 
\begin{equation*}
\begin{split}
\cS_*(\cO_1,\cO_2)&=\mu_*^{-1}(\cO_1,\cO_2)/G_2\\
&=\bigl\{(x_1,x_2,g_1,g_2) \,\, | \,\, x_1+x_2\in\cO_1,\,\,-Ad_{g_1^{-1}}^*(x_1)-Ad_{g_2^{-1}}(x_2)\in\cO_2\bigr\}/G_2,
\end{split}
\end{equation*}
with the gauge group $G_2$ now acting by \eqref{action*}. These symplectic leaves were used in \cite{R3}. They are related to the symplectic leaves $\cS(\cO_1,\cO_2)$ in the following way.

Consider the map $\psi: T^*(G^{\times 2})\rightarrow T^*(G^{\times 2})$, defined by 
\[
\psi(x_1,x_2,g_1,g_2)=(-x_1,Ad_{g_1}^*(x_2),g_1,g_1g_2g_1).
\]
Then $\psi$ is $G_2$-equivariant,
\[
\psi((h_1,h_2)(x_1,x_2,g_1,g_2))=(h_1,h_2)_*\psi(x_1,x_2,g_1,g_2),
\]
and the resulting map on the $G_2$-orbits restricts to an isomorphism 
\begin{equation*}
\cS(\cO_1,\cO_2)\overset{\sim}{\longrightarrow}\cS_*(\cO_2,\cO_1).
\end{equation*}


\end{document}